\newcommand{\email}[1]{\href{mailto:#1}{\nolinkurl{#1}}}
\newcommand{\Bin}{\mathsf{Bin}}
\newcommand{\Poi}{\mathsf{Poi}}
\theoremstyle{plain}
\newtheorem{theorem}{Theorem}
\newtheorem{lemma}{Lemma}
\theoremstyle{definition}
\newtheorem*{example}{Example}
\renewcommand{\vec}[1]{\mathbf{#1}}
\newcommand{\E}{\mathbf{E}}
\renewcommand{\P}{\mathbf{Pr}}
\newcommand{\R}{\mathbb{R}}
\newcommand{\N}{\mathbb{N}}
\newcommand{\chr}[1]{\ifthenelse{\equal{#1}{}}{\chi}{\chi[#1]}}
\newcommand{\Bern}{\mathfrak{B}}
\newcommand{\Q}{\mathcal{Q}}
\newcommand{\F}{\hat{F}}
\newcommand{\T}{\mathcal{T}}
\newcommand{\Rs}{R^{\mathrm{s}}}
\newcommand{\Rd}{R^{\mathrm{d}}}
\newcommand{\Rm}{R^{\mathrm{m}}}
\newcommand{\Rx}{R^{\mathrm{x}}}
\newcommand{\Rl}{R^{\mathrm{l}}}
\newcommand{\tR}{\tilde{R}}
\newcommand{\raisemath}[1]{\mathpalette{\raisem@th{#1}}}
\newcommand{\raisem@th}[3]{\raisebox{#1}[0pt]{$#2#3$}}
\newcommand{\Rsn}[1]{\Rs_{\raisemath{0pt}{F,n,#1}}}
\newcommand{\Rdn}[1]{\Rd_{\raisemath{0pt}{F,n,#1}}}
\newcommand{\Rmn}[1]{\Rm_{\raisemath{0pt}{F,n,#1}}}
\newcommand{\Rxn}[1]{\Rx_{\raisemath{0pt}{F,n,#1}}}
\newcommand{\Rsnk}{\Rsn{k}}
\newcommand{\Rdnk}{\Rdn{k}}
\newcommand{\Rmnk}{\Rmn{k}}
\newcommand{\Rxnk}{\Rxn{k}}
\newcommand{\midd}{:}
\renewcommand{\phi}{\varphi}
\newcommand{\secref}[1]{Section~\ref{#1}}
\newcommand{\thmref}[1]{Theorem~\ref{#1}}
\newcommand{\lemref}[1]{Lemma~\ref{#1}}
\newcommand{\corref}[1]{Corollary~\ref{#1}}
\newcommand{\figref}[1]{Figure~\ref{#1}}
\newcommand{\appref}[1]{Appendix~\ref{#1}}
\newcommand{\ie}{i.e.,\xspace}
\newcommand{\eg}{e.g.,\xspace}
\def\clap#1{\hbox to 0pt{\hss#1\hss}}
  \def\mathclap{\mathpalette\mathclapinternal}
\def\mathclapinternal#1#2{\clap{$\mathsurround=0pt#1{#2}$}}
\title{Revenue Gaps for Static and Dynamic Posted Pricing\\of Homogeneous Goods\thanks{Helpful discussions with Rad Niazadeh and Frieder Smolny are gratefully acknowledged. The second and third author were supported by the Einstein Foundation Berlin.}}
\author{%
	Paul D\"utting\thanks{Department of Mathematics, London School of Economics, Houghton Street, London WC2A 2AE, UK. Email: \email{p.d.duetting@lse.ac.uk}}
	\and
	Felix Fischer\thanks{School of Mathematical Sciences, Queen Mary University of London, Mile End Road, London E1 4NS, UK. Email: \email{felix.fischer@qmul.ac.uk}}
	\and
	Max Klimm\thanks{Wirtschaftswissenschaftliche Fakult\"at, Humboldt-Universit\"at zu Berlin, Spandauer Stra\ss{}e 1, 10178 Berlin, Germany, Email: \email{max.klimm@hu-berlin.de}}
}
\date{}
\begin{document}

\maketitle

\begin{abstract}
We consider the problem of maximizing the expected revenue from selling~$k$ homogeneous goods to~$n$ unit-demand buyers who arrive sequentially with independent and identically distributed valuations. In this setting the optimal posted prices are dynamic in the sense that they depend on the remaining numbers of goods and buyers. We investigate how much revenue is lost when a single static price is used instead for all buyers and goods, and prove upper bounds on the ratio between the maximum revenue from dynamic prices and that from static prices. These bounds are tight for all values of~$k$ and~$n$ and vary depending on a regularity property of the underlying distribution. For general distributions we obtain a ratio of $2-k/n$, for regular distributions a ratio that increases in $n$ and is bounded from above by $1/(1-\frac{k^k}{e^{k}k!})$, which is roughly $1/(1- \frac{1}{\sqrt{2\pi k}})$. The lower bounds hold for the revenue gap between dynamic and static prices. The upper bounds are obtained via an ex-ante relaxation of the revenue maximization problem, as a consequence the tight bounds of $2-k/n$ in the general case and of $1/(1- \frac{1}{\sqrt{2 \pi k}})$ in the regular case apply also to the potentially larger revenue gap between the optimal incentive-compatible mechanism and the optimal static price. Our results imply that for regular distributions the benefit of dynamic prices vanishes while for non-regular distributions dynamic prices may achieve up to twice the revenue of static prices.
\end{abstract}

\section{Introduction}

In the 1820s Irishman Alexander Turney Stewart opened a dry goods store in New York City and adopted a radical new policy. Rather than negotiating prices individually with each customer, he offered his goods at fixed prices. While Stewart was raised by Quaker grandparents and arrived in New York with their introduction letters to American Quakers~\citep{Snodgrass2011}, we can only speculate whether his decision was based on the Quaker ethic of honesty and non-preferential treatment as expressed in George Fox's tract that warned the merchants of London against using the ``cozening and cheating and defrauding'' practice of haggling~\citep{Fox1658,Kent1983}. Rival retailers predicted that Stewart would be bankrupt within a week. Instead he became a multi-millionaire, and the A.~T.~Stewart Dry Goods Store the largest department store in the world~\citep{Resseguie1965}. More than 150 years later, fixed prices are the main mechanism for the sale of goods and an important mechanism for the sale of services. For instance, around 50\% of the total consumer expenditure in the US is spent on goods or services with fixed prices, with vehicles and housing as notable exceptions \citep{Philipps2012}.

Fixed-price policies came under scrutiny following the Airline Deregulation Act of 1978, which removed government control from air transport routes and fares and allowed low-cost carriers like People Express Airlines to take large profits. Established carriers like the British Overseas Airways Cooperation and
American Airlines reacted by adopting a new type of policy of capacity-controlled fares where prices depend on the remaining number of seats and the time until departure \citep[e.g.,][]{McAfeeVelde06}.
Today dynamic pricing and revenue management are considered major success stories in operations research \citep[e.g.,][]{TallurivanRyzin04}, and more and more industries adopt ever more sophisticated algorithmic pricing strategies. While originally it was limited mainly to electronic commerce, dynamic pricing is set to gain traction also in physical retailing, for example through electronic price tags that can be updated much more quickly than paper price tags.

A potential disadvantage of dynamic prices is that they may be perceived as unfair by customers. The dual entitlement principle argues that firms are entitled to a fair profit and customers to a fair price~\citep{Kahnemann1986a,Kahnemann1986b}. %
To determine whether a particular price is fair customers may refer to price histories and prices charged by competitors~\citep{Briesch1997}, and research has shown that prices paid by other customers in particular have a major effect on customer satisfaction~\citep{Novemsky2004}. Customer satisfaction, on the other hand, is crucial for repeated purchase interactions~\citep{Barry1996}. For more detailed studies on the perceived fairness of yield management we refer to interested reader to the work of \citet{Kambil2001} and \citet{Mauri2007} and the references therein.

The potential negative effects of dynamic pricing are, at this point, very much a subject of academic debate and public controversy. The latter is witnessed for example by recent news items on spikes in surge prices offered by online ridesharing platform Uber on New Year's Eve of 2014 \citep{cbsnews} and on online retailer Amazon's increase of the price of bottled water in Florida amidst preparations for Hurricane Irma \citep{Forbes17}. 
These news items indicate that customers are becoming increasingly aware and wary of dynamic pricing, and that there is a tension between the potential for higher revenue achievable through dynamic prices and a decrease in customer satisfaction that may result from their use.
Our point of departure in this article is the natural question whether the advantages of dynamic pricing outweigh its disadvantages. We address this question by quantifying precisely the relative loss in revenue due to a use of static prices instead of dynamic ones.

\subsection{Our Contribution}

We provide the first formal study of the revenue gap between dynamic and static pricing in a multiple-goods setting. We focus on the case where $k$ homogeneous goods are sold sequentially to $n$ unit-demand buyers arriving one at a time. Motivated by situations in which the $n$ buyers that show up are random draws from a very large population, we focus on the case where buyer valuations are independently and identically distributed. We note that both the assumption of identical goods and the assumption of a large population of customers are reasonable for the industries that pioneered dynamic pricing such as the airline, hotel, and railway industries.

More formally, we assume 
that we have~$k$ homogeneous goods for sale and~$n$ buyers arrive one by one, each of them interested in purchasing a single good and with a valuation drawn independently of those of the other buyers from a known distribution~$F$. 
When buyer~$i$ arrives with valuation~$v_i$ we offer it a price~$p_{ij}$ that can depend on the number~$j$ of goods still available, on the distribution~$F$, and on~$i$, but not on~$v_i$. If $v_i\geq p_{ij}$ buyer~$i$ purchases the good at price~$p_{ij}$, otherwise it leaves without a purchase. The process continues as long as additional buyers and goods remain, and we obtain as overall revenue the sum of the prices at which goods have been sold.

Prices $p_{ij}$ for $1\leq i\leq n$ and $1\leq j\leq k$ maximizing the expected revenue can be determined by solving a dynamic program, but they can be very complicated and can in general not be given in closed form. Instead of \emph{dynamic prices} as above it may therefore make sense to offer the same \emph{static price}~$p$ to each of the buyers. Static prices are appealing due to their simplicity and fairness but may lead to a loss in revenue. Our goal here will be to investigate the extent of this possible loss. Denoting by $\Rdnk$ and $\Rsnk$ the maximum revenue from dynamic and static prices when there are~$n$ buyers and~$k$ goods we will specifically be interested in bounds on the revenue gap $\Rdnk/\Rsnk$. Obviously $\Rdnk\geq\Rsnk$ and thus $\Rdnk/\Rsnk\geq 1$.

\subsubsection{Results}

We derive tight bounds on the ratio between~$\Rdnk$ and~$\Rsnk$ both for general distributions and the important special case of regular distributions. Regular distributions were introduced by \citet{Myer81a} in the context of revenue-maximizing auctions. A distribution is regular if the revenue $R(q)$ obtained by selling to a single buyer with probability~$q$ is concave in~$q$.
The reader may consult \figref{fig:results_graph} for a graphical representation of our results and \figref{fig:results} for a comparison with bounds from prior work.

\begin{figure}
\centering
\begin{tikzpicture}
\begin{axis}[
axis x line=center,
axis y line=center,
xlabel style={right},
ylabel style={above},
xlabel={$k$},
ylabel={$\Rdnk/\Rsnk$},
axis line style = {-latex,thick},
xmax=110,
ymax=2.1]
\addplot[color=blue] coordinates {
( 1, 1.58198) ( 2, 1.37112) ( 3, 1.28873) ( 4, 1.2428 ) ( 5, 1.21281)
( 6, 1.19136) ( 7, 1.17509) ( 8, 1.16223) ( 9, 1.15175) (10, 1.143  ) 
(11, 1.13556) (12, 1.12914) (13, 1.12352) (14, 1.11855) (15, 1.11413)
(16, 1.11015) (17, 1.10654) (18, 1.10326) (19, 1.10026) (20, 1.0975 )
(21, 1.09494) (22, 1.09258) (23, 1.09038) (24, 1.08832) (25, 1.08639)
(26, 1.08459) (27, 1.08288) (28, 1.08128) (29, 1.07976) (30, 1.07832)
(31, 1.07696) (32, 1.07566) (33, 1.07443) (34, 1.07325) (35, 1.07213)
(36, 1.07105) (37, 1.07002) (38, 1.06903) (39, 1.06809) (40, 1.06718)
(41, 1.0663 ) (42, 1.06546) (43, 1.06465) (44, 1.06386) (45, 1.06311)
(46, 1.06238) (47, 1.06167) (48, 1.06099) (49, 1.06033) (50, 1.05969)
(51, 1.05907) (52, 1.05846) (53, 1.05788) (54, 1.05731) (55, 1.05676)
(56, 1.05622) (57, 1.0557 ) (58, 1.0552 ) (59, 1.0547 ) (60, 1.05422)
(61, 1.05375) (62, 1.05329) (63, 1.05285) (64, 1.05241) (65, 1.05199)
(66, 1.05157) (67, 1.05117) (68, 1.05077) (69, 1.05039) (70, 1.05001)
(71, 1.04964) (72, 1.04928) (73, 1.04892) (74, 1.04857) (75, 1.04823)
(76, 1.0479 ) (77, 1.04758) (78, 1.04726) (79, 1.04694) (80, 1.04663)
(81, 1.04633) (82, 1.04604) (83, 1.04575) (84, 1.04546) (85, 1.04518)
(86, 1.04491) (87, 1.04464) (88, 1.04437) (89, 1.04411) (90, 1.04386)
(91, 1.0436 ) (92, 1.04336) (93, 1.04311) (94, 1.04287) (95, 1.04264)
(96, 1.04241) (97, 1.04218) (98, 1.04195) (99, 1.04173)(100, 1.04152)
};
\addplot[color=green!50!black] coordinates {
( 1, 1.57275) ( 2, 1.36088) ( 3, 1.27745) ( 4, 1.23056) ( 5, 1.19966)
( 6, 1.17736) ( 7, 1.16028) ( 8, 1.14665) ( 9, 1.13542) (10, 1.12594)
(11, 1.1178 ) (12, 1.11068) (13, 1.10438) (14, 1.09875) (15, 1.09366)
(16, 1.08903) (17, 1.08478) (18, 1.08086) (19, 1.07721) (20, 1.07381)
(21, 1.07062) (22, 1.06761) (23, 1.06476) (24, 1.06206) (25, 1.05948)
(26, 1.05701) (27, 1.05464) (28, 1.05236) (29, 1.05016) (30, 1.04802)
(31, 1.04595) (32, 1.04393) (33, 1.04195) (34, 1.04001) (35, 1.0381 )
(36, 1.03622) (37, 1.03435) (38, 1.03249) (39, 1.03063) (40, 1.02877)
(41, 1.02688) (42, 1.02497) (43, 1.02301) (44, 1.02097) (45, 1.01884)
(46, 1.01656) (47, 1.01406) (48, 1.01117) (49, 1.00749) (50, 1      )
(51,1)(52,1)(53,1)(54,1)(55,1)(56,1)(57,1)(58,1)(59,1)(60,1)
(61,1)(62,1)(63,1)(64,1)(65,1)(66,1)(67,1)(68,1)(69,1)(70,1)
(71,1)(72,1)(73,1)(74,1)(75,1)(76,1)(77,1)(78,1)(79,1)(80,1)
(81,1)(82,1)(83,1)(84,1)(85,1)(86,1)(87,1)(88,1)(89,1)(90,1)
(91,1)(92,1)(93,1)(94,1)(95,1)(96,1)(97,1)(98,1)(99,1)(100,1)
};
\addplot[color=red] coordinates {
( 1, 1.55881) ( 2, 1.34528) ( 3, 1.26009) ( 4, 1.21147) ( 5, 1.1789 )
( 6, 1.15493) ( 7, 1.13618) ( 8, 1.12085) ( 9, 1.10789) (10, 1.09661)
(11, 1.08657) (12, 1.07745) (13, 1.06899) (14, 1.061  ) (15, 1.05328)
(16, 1.04563) (17, 1.0378 ) (18, 1.02936) (19, 1.01923) (20, 1)
(21,1)(22,1)(23,1)(24,1)(25,1)(26,1)(27,1)(28,1)(29,1)(30,1)
(31,1)(32,1)(33,1)(34,1)(35,1)(36,1)(37,1)(38,1)(39,1)(40,1)
(41,1)(42,1)(43,1)(44,1)(45,1)(46,1)(47,1)(48,1)(49,1)(50,1)
(51,1)(52,1)(53,1)(54,1)(55,1)(56,1)(57,1)(58,1)(59,1)(60,1)
(61,1)(62,1)(63,1)(64,1)(65,1)(66,1)(67,1)(68,1)(69,1)(70,1)
(71,1)(72,1)(73,1)(74,1)(75,1)(76,1)(77,1)(78,1)(79,1)(80,1)
(81,1)(82,1)(83,1)(84,1)(85,1)(86,1)(87,1)(88,1)(89,1)(90,1)
(91,1)(92,1)(93,1)(94,1)(95,1)(96,1)(97,1)(98,1)(99,1)(100,1)
};
\addplot[color=blue,dashed] coordinates {
( 1,2)( 2,2)( 3,2)( 4,2)( 5,2)( 6,2)( 7,2)( 8,2)( 9,2)(10,2)
(11,2)(12,2)(13,2)(14,2)(15,2)(16,2)(17,2)(18,2)(19,2)(20,2)
(21,2)(22,2)(23,2)(24,2)(25,2)(26,2)(27,2)(28,2)(29,2)(30,2)
(31,2)(32,2)(33,2)(34,2)(35,2)(36,2)(37,2)(38,2)(39,2)(40,2)
(41,2)(42,2)(43,2)(44,2)(45,2)(46,2)(47,2)(48,2)(49,2)(50,2)
(51,2)(52,2)(53,2)(54,2)(55,2)(56,2)(57,2)(58,2)(59,2)(60,2)
(61,2)(62,2)(63,2)(64,2)(65,2)(66,2)(67,2)(68,2)(69,2)(70,2)
(71,2)(72,2)(73,2)(74,2)(75,2)(76,2)(77,2)(78,2)(79,2)(80,2)
(81,2)(82,2)(83,2)(84,2)(85,2)(86,2)(87,2)(88,2)(89,2)(90,2)
(91,2)(92,2)(93,2)(94,2)(95,2)(96,2)(97,2)(98,2)(99,2)(100,2)
};
\addplot[color=green!50!black,dashed] coordinates {
( 1,1.98)( 2,1.96)( 3,1.94)( 4,1.92)( 5,1.90)
( 6,1.88)( 7,1.86)( 8,1.84)( 9,1.82)(10,1.80)
(11,1.78)(12,1.76)(13,1.74)(14,1.72)(15,1.70)
(16,1.68)(17,1.66)(18,1.64)(19,1.62)(20,1.60)
(21,1.58)(22,1.56)(23,1.54)(24,1.52)(25,1.50)
(26,1.48)(27,1.46)(28,1.44)(29,1.42)(30,1.40)
(31,1.38)(32,1.36)(33,1.34)(34,1.32)(35,1.30)
(36,1.28)(37,1.26)(38,1.24)(39,1.22)(40,1.20)
(41,1.18)(42,1.16)(43,1.14)(44,1.12)(45,1.10)
(46,1.08)(47,1.06)(48,1.04)(49,1.02)(50,1.00)
(51,1)(52,1)(53,1)(54,1)(55,1)
(56,1)(57,1)(58,1)(59,1)(60,1)
(61,1)(62,1)(63,1)(64,1)(65,1)
(66,1)(67,1)(68,1)(69,1)(70,1)
(71,1)(72,1)(73,1)(74,1)(75,1)
(76,1)(77,1)(78,1)(79,1)(80,1)
(81,1)(82,1)(83,1)(84,1)(85,1)
(86,1)(87,1)(88,1)(89,1)(90,1)
(91,1)(92,1)(93,1)(94,1)(95,1)
(96,1)(97,1)(98,1)(99,1)(100,1)
};
\addplot[color=red,dashed] coordinates {
( 1,1.95)( 2,1.90)( 3,1.85)( 4,1.80)( 5,1.75)
( 6,1.70)( 7,1.65)( 8,1.60)( 9,1.55)(10,1.50)
(11,1.45)(12,1.40)(13,1.35)(14,1.30)(15,1.25)
(16,1.20)(17,1.15)(18,1.10)(19,1.05)(20,1.00)
(21,1)(22,1)(23,1)(24,1)(25,1)
(26,1)(27,1)(28,1)(29,1)(30,1)
(31,1)(32,1)(33,1)(34,1)(35,1)
(36,1)(37,1)(38,1)(39,1)(40,1)
(41,1)(42,1)(43,1)(44,1)(45,1)
(46,1)(47,1)(48,1)(49,1)(50,1)
(51,1)(52,1)(53,1)(54,1)(55,1)
(56,1)(57,1)(58,1)(59,1)(60,1)
(61,1)(62,1)(63,1)(64,1)(65,1)
(66,1)(67,1)(68,1)(69,1)(70,1)
(71,1)(72,1)(73,1)(74,1)(75,1)
(76,1)(77,1)(78,1)(79,1)(80,1)
(81,1)(82,1)(83,1)(84,1)(85,1)
(86,1)(87,1)(88,1)(89,1)(90,1)
(91,1)(92,1)(93,1)(94,1)(95,1)
(96,1)(97,1)(98,1)(99,1)(100,1)
};
\addplot[color=black,thick] coordinates {
(1,1) (100,1)
};
\end{axis}
\end{tikzpicture}	
\caption{Revenue gaps $\Rdnk / \Rsnk$ for $n=20$ buyers (red), $n=50$ buyers (green), and $n\to \infty$ buyers (blue). The results for regular distributions are drawn with solid lines; the results for arbitrary distributions are dashed. For better illustration, the lines are smooth even though $k$ only takes integral values.\label{fig:results_graph}}
\end{figure}
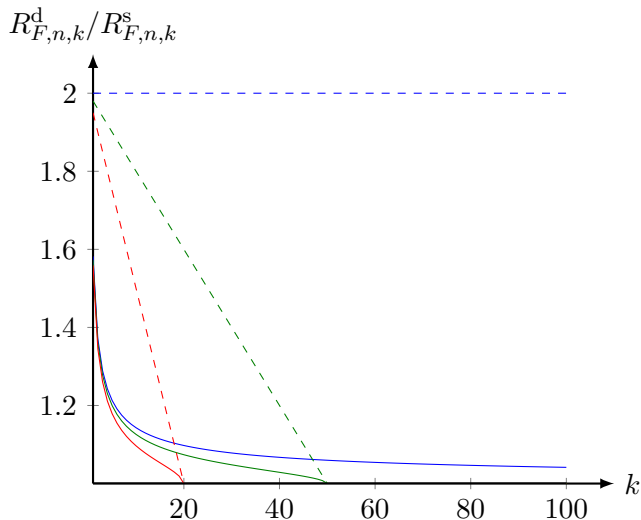

\begin{figure}[t]
\centering
\begin{tabular}{@{}p{2.1cm}p{.2cm}p{.6cm}p{0cm}p{.6cm}p{.5cm}p{3.5cm}p{2.5cm}}
\toprule
&& \multicolumn{1}{c}{$k$} && \multicolumn{1}{c}{$n$} && \multicolumn{2}{c}{~~~~distribution~$F$}\\
&& && && \multicolumn{1}{c}{\textbf{regular}} & \multicolumn{1}{c}{\textbf{irregular}} \\
\midrule
\multirow{2}{*}{\textbf{this article}} && \multicolumn{1}{c}{$\geq 1$} && \multicolumn{1}{c}{$\geq k$} & & \multicolumn{1}{c}{$\frac{k}{\E_{X\sim\textsf{Bin}(n,\frac{k}{n})}[\min\{X,k\}]}$} &  \multicolumn{1}{c}{$2-\frac{k}{n}$} \\[12pt]
&& \multicolumn{1}{c}{$\geq 1$} && \multicolumn{1}{c}{$\to\infty$} & & \multicolumn{1}{c}{$\frac{1}{1-\frac{k^k}{e^{k}k!}}$} &  \multicolumn{1}{c}{$2$}\\
\midrule
\multirow{1}{*}{\textbf{prior work}} && \multicolumn{1}{c}{$=1$} && \multicolumn{1}{c}{$\to\infty$} & & \multicolumn{1}{c}{$\bigl[1.11,\frac{e}{e-1}\bigr]$} & \multicolumn{1}{c}{$2$} \\
\bottomrule
\end{tabular}
\caption{Comparison between bounds on the revenue gap $\Rdnk / \Rsnk$ of dynamic and static prices obtained in this article and the best bounds from prior work. Numbers in square brackets indicate lower and upper bounds, a single number means that the bound is tight.}
\label{fig:results}
\end{figure}

For the special case of regular distributions we obtain an upper bound on the revenue gap of at most $k/\E_{X\sim\textsf{Bin}(n,k/n)}[\min\{X,k\}]$, where $\textsf{Bin}(n,k/n)$ denotes a binomial distribution with~$n$ trials and success probability~$k/n$. This bound increases in~$n$ and tends to $1/(1- \frac{k^k}{e^{k}k!}) \approx (1/(1- \frac{1}{\sqrt{2 \pi k}}))$. %
In the single-good case our upper bound simplifies to $1/(1-(1-\frac{1}{n})^n)$ and asymptotically to $e/(e-1)\approx 1.582$, which also follows from an earlier result of \citet{ChawlaHMS10}. 
Our upper bound is in fact tight for all values of~$n$ and $k$, which we show by optimizing over a specific family of distributions that are continuous on an interval with additional point mass on the highest value.
A tight result was not previously known even in the single-good case~\citep{BlumrosenH08,Hartline17}.

Our upper bound of $k/\E_{X\sim\textsf{Bin}(n,k/n)}[\min\{X,k\}]$ arises from a balls-and-bins analogy for the setting with a static price. There are two bins, respectively counting buyers who are willing and unwilling to buy. The price is set in such a way that each buyer is willing to buy with probability $k/n$, so that the expected number of buyers willing to buy is equal to $\E_{X \sim \textsf{Bin}(n,k/n)}[X]$ and the expected number of items sold to $\E_{X\sim\textsf{Bin}(n,k/n)}[\min\{X,k\}]$. The asymptotic bound of $1/(1- \frac{k^k}{e^{k}k!})\approx (1/(1- \frac{1}{\sqrt{2 \pi k}}))$ then follows from a standard approximation of the binomial distribution by the Poisson distribution.
Similar asymptotic bounds for related but different problems were obtained by \citet{Alaei14} and \citet{Chakraborty2010}.\footnote{\citet{Alaei14} studies posted-price mechanisms for revenue maximization in a setting with~$k$ homogeneous goods and~$n$ unit-demand buyers whose values are independent but not necessarily identically distributed. The mechanisms considered by \citeauthor{Alaei14} allow individual buyers to be skipped, which cannot be implemented with a single static price. For the same setting, \citet{Chakraborty2010} investigate the revenue gap between pricing policies that may change the order in which buyers are seen after some of the buyers have arrived and pricing policies that use a fixed ordering. Since even their fixed-ordering policies may offer different prices to different buyers, their results are incomparable to ours.}

For general distributions we obtain an upper bound on the revenue gap of $2-k/n$. This bound is again tight for all values of~$n$ and~$k$, which we show by means of a discrete distribution with support of size two. To our knowledge these are the first bounds on the revenue gap between static and dynamic prices for irregular distributions and more than one good. For a single good an asymptotic lower bound of $2$ was given by \citet{ChawlaHK07}, and an upper bound of~$2$ can be obtained via prophet inequalities~\citep[\eg][]{Hartline17}.

The main take-away from our analysis is the qualitatively very different behavior of the revenue gap as $n$ and $k$ grow large. In many cases of interest $\Rdnk/\Rsnk \rightarrow 1$ for regular distributions and so static pricing is ``first best'', while $\Rdnk/\Rsnk \rightarrow 2$ for irregular distributions and so there is a persistent gap between dynamic and static pricing.
Many common distributions are regular \citep[cf.][]{Ewerhart2013}, whereas regularity may fail even for very simple mixtures of regular distributions. For instance the uniform distribution on $[0,1]$ is regular while the distribution that draws with probability $4/5$ from a uniform distribution on $[0,4]$ and with probability $1/5$ from a uniform distribution on $[4,10]$ is not, see \figref{fig:uniform}. This leads to the following implication for the large population model and large $n$ and $k$: \emph{If the population is homogeneous, static pricing is optimal. Otherwise, dynamic pricing may be significantly better.}

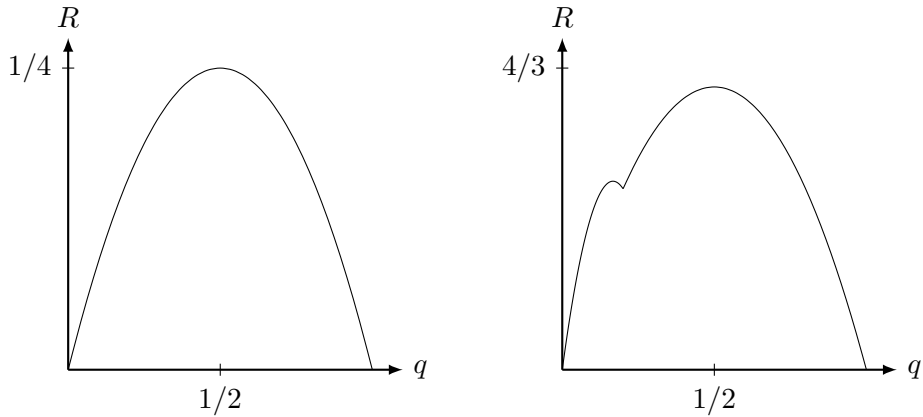
\begin{figure}[tb]
\centering
\begin{subfigure}{0.4\textwidth}
\begin{tikzpicture}[scale=4]
\draw[thick,-latex] (0,0) -- (1.1,0) node[right] {$q$};
\draw[thick,-latex] (0,0) -- (0,1.1) node[above] {$R$};
\draw (0.5,0.02) -- (0.5,-0.02) node[below] {$1/2$};
\draw (0.02,1) -- (-0.02,1) node[left] {$1/4$};
\draw[domain=0:1,smooth,variable=\q]  plot ({\q},{4*\q*(1-\q)});
\end{tikzpicture}
\end{subfigure}
\begin{subfigure}{0.4\textwidth}
\begin{tikzpicture}[scale=4]
\draw[thick,-latex] (0,0) -- (1.1,0) node[right] {$q$};
\draw[thick,-latex] (0,0) -- (0,1.1) node[above] {$R$};
\draw (0.5,0.02) -- (0.5,-0.02) node[below] {$1/2$};
\draw (0.02,1) -- (-0.02,1) node[left] {$4/3$};
\draw[domain=0:0.2,smooth,variable=\q]  plot ({\q},{0.75*\q*(10-30*\q)});
\draw[domain=0.2:1,smooth,variable=\q]  plot ({\q},{0.75*\q*(5-5*\q)});
\end{tikzpicture}
\end{subfigure}

\caption{Revenue curves for the uniform distribution on $[0,1]$ (left), and the distribution that draws with probability $4/5$ uniformly from $[0,4]$ and with the remaining probability uniformly from $[4,10$] (right).\label{fig:uniform}}
\end{figure}

\subsubsection{Techniques}

The main difficulty in analyzing the revenue gap is the unavailability of closed-form expressions for the optimal dynamic prices and the corresponding revenue, which holds already for very simple cases like valuations distributed uniformly on $[0,1]$ \citep[\S~5a]{GilbertM66}. We address this difficulty by considering an ex-ante relaxation of the dynamic program characterizing the optimal dynamic prices which relaxes the constraint that at most~$k$ goods are sold to only hold in expectation, and by appealing to and exploiting properties of Bernstein polynomials in the analysis of the revenue gap between static prices and this relaxation.

We begin our analysis by observing that for independent and identically distributed valuations, the ex-ante relaxation has a symmetric optimum where each buyer obtains a good with the same probability $q^*$. In the regular case it then holds that $R(q^*)=q^*p^*$ for some price $p^*$, and we analyze the revenue gap between the ex-ante relaxation and the static price~$p^*$. Our main idea in this context is to write the revenue generated by the static price~$p^*$ as the Bernstein polynomial of a monotone and concave function. By using that the approximation by Bernstein polynomials preserves monotonicity and concavity, we are then able to establish that the revenue gap is maximized when $q^*=k/n$. A simplified version of the resulting upper bound can be obtained from the well-known fact that the binomial distribution with~$n$ trials and success probability~$k/n$ converges to the Poisson distribution with rate~$k$.
To obtain a matching lower bound we consider a family of distributions with a piecewise linear revenue curve that is monotone and, for some parameter $r$, bends at a revenue of $r$. We write the revenue gap for distributions within this family as a maximin problem where in the upper level we choose~$r$ and in the lower level an adversary chooses the probability~$q$ with which a single buyer accepts the optimal static price. We then use Berge's maximum theorem to establish the existence of a choice of~$r$ for which the minimum in the lower level is attained at $q=k/n$, and observe that the resulting lower bound is equal to the upper bound for all values of~$n$ and~$k$.

In the case of general distributions a symmetric solution of the ex-ante relaxation may correspond to a lottery over prices, and the methods described so far would only allow us to bound the revenue gap between the ex-ante relaxation and a static \emph{lottery over two prices}. To bound the revenue gap for a \emph{single} static price we instead consider a different lottery that selects one of the prices for all buyers, and carefully choose the probabilities within the lottery to express its expected revenue in terms of a convex combination of Bernstein polynomials of a concave function. By using further properties of Bernstein polynomials we are then able to determine the worst case for the upper bound, which occurs when the ex-ante relaxation sells to $n-k$ buyers with a vanishing probability and to~$k$ buyers with probability close to~$1$. A simple two-point distribution shows that the resulting bound is again tight for all values of $k$ and $n$. 

The ex-ante relaxation we consider, like previous relaxations of this type, in fact turns out to relax not only the optimal dynamic prices but any incentive-compatible mechanism.\footnote{The ex-ante relaxation was identified as a quantity of interest by \citet{ChawlaHK07} and used subsequently for example by \citet{Yan11}, \citet{Alaei14}, and \citet{ChMi16a}.} Consequently our upper bounds apply even to the gap between the revenue from the best incentive-compatible mechanism and the best static price. Our lower bounds show that the upper bounds are tight already for the gap between dynamic and static prices. 

\subsection{Related Work}

Dynamic pricing is one of the classic topics in operations research. Work in this area has emphasized more complex models and has focused on characterizing and computing optimal prices in these models \citep{GallegovanRyzin94,GallegovanRyzin97,BitranMondschein97,SubramanianEtAl99,FengXiao00,AvivPazgal05,BesbesZeevi09,BesbesLobel15,Caldentey17}.
This literature, however, has not quantified how much revenue can be gained by using dynamic rather than static prices.

A question related to ours was first considered by \citet{ChawlaHK07}, who used virtual values to compare the maximum revenue from an incentive-compatible mechanism to that of sequential posted prices. They gave dynamic prices approximating the optimal revenue from a mechanism within a factor of~$3$. For identical regular distributions the prices are in fact identical and an improved upper bound of~$2.17$ can be obtained. For identical irregular distributions the authors gave a lower bound of~$2$. It is not difficult to see that both the upper bound of~$2.17$ and the lower bound of~$2$ apply to the gap between dynamic and static pricing we consider here.

For the same problem and under the same assumptions, \citet{ChawlaHMS10} obtained tighter bounds by analyzing virtual values using results from optimal stopping theory, in particular the classic prophet inequality of \citet{KrengelS1973,KrengelS1978} 
and an improved version due to the authors for a setting where buyers can be considered in a particular order. When specialized to identical regular distributions, these results imply an upper bound of $e/(e-1)$ on the gap between dynamic and static prices. For general distributions the technique requires an ironing step, and thus additional work to obtain an upper bound with respect to a single static price rather than a static lottery over prices.

Non-identical distributions were first analyzed by \citet{AlaeiHNPY15}, who formulated the revenue gap between the optimal incentive-compatible mechanism and a static price as a mathematical program and through a series of relaxations obtained an upper bound of $e\approx 2.718$ for the case of regular distributions. 
In parallel work to our work a tight bound of roughly~$2.62$ was shown on the gap between dynamic and static prices for regular distributions~\citep{Jin18}, while for general distributions this gap can be as large as~$n$~\citep{AlaeiHNPY15}. 

\citet{BlumrosenH08} gave a characterization of the optimal dynamic and static prices for a single good and identical distributions, along with formulas for the asymptotic revenue in terms of parameters of the distribution. For the regular power-law distribution they showed that the maximum revenue obtained respectively by an incentive-compatible mechanism, by dynamic prices, and by a static price is $0.89\sqrt{n}$, $0.71 \sqrt{n}$, and $0.64\sqrt{n}$, which implies an asymptotic lower bound of approximately~$0.71/0.64 \approx 1.11$ for the gap we are interested in. The solution of an exercise set by \citet{Hartline17} provides an improved, yet not a tight, lower bound for the single-good case.

All of these works on the gap between static and dynamic pricing belong to the rapidly expanding literature on prophet inequalities   \citep{HajiaghayiKS07,ChawlaHMS10,KleinbergW12,AzarKW14,Alaei14,DuttingK15,FeldmanGL15,Rubi16,RuSi17,DuettingFKL17,CorreaFHOV17,AbolhassaniEEHK17,EhsaniHKS18,CorreaEtAl19}. 

\section{Preliminaries}

We consider a situation with~$k$ homogeneous goods and a set $[n]=\{1,\dots,n\}$ of unit-demand buyers, where $k\geq n$. For $i\in[n]$, we denote by~$v_i$ the valuation of buyer~$i$ for receiving a good, and assume that valuations are distributed independently according to the same non-negative, and possibly discontinuous, probability distribution with cumulative distribution function~$F$. 

Denote by $\F(x)$ the left-sided limit of~$F$ at~$x$, \ie $\F(x)=\lim_{y\uparrow x}F(y)$. The \emph{revenue curve} of~$F$ then is the function $R:[0,1]\to\R_{\geq 0}$ such that $R(q)=q\F^{-1}(1-q)$, where, with a slight abuse of notation, $\F^{-1}(1-q)=\max\{p\geq 0\midd\F(p)\leq 1-q\}$. By construction $\F$ is left-continuous, so that the maximum is attained.  Intuitively, $R(q)$ is the maximum revenue that can be obtained by selling a single good to a single buyer at a deterministic price so that the probability of sale is exactly~$q$.\footnote{Note that when~$F$ is not continuous, a price that leads to a probability of sale of exactly~$q$ may not exist. In this case $R(q)$ is the revenue obtained by selling at the largest price that sells with probability at least $q$, but only with probability $q$.}
We call distribution~$F$ \emph{regular} if~$R$ is concave. Note that for distributions with a non-vanishing density $f(x) = \F'(x)$ regularity is equivalent to the fact that
\begin{align*}
R'(q) = \bigl( q \F^{-1}(1-q) \bigr)' = \F^{-1}(1-q) - \frac{q}{f(\F^{-1}(1-q))}
\end{align*}
is non-decreasing for all $q \in [0,1]$.
Substituting $x = \F^{-1}(1-q)$ we obtain that 
\begin{align*}
\phi(x) = x - \frac{1-\F(x)}{f(x)}
\end{align*}
is non-decreasing for all $x$. Our notion of regularity is thus equivalent to that of \citep{Myer81a}, who calls~$\phi$ the \emph{virtual value function} of~$F$, but is defined also for distributions without a density.

For irregular distributions it will be useful to also consider an \emph{ironed revenue curve} $\tR:[0,1]\to\R_{\geq 0}$ given by 
\begin{align*}
\tR(q)=\sup_{a\in[0,q],b\in[q,1]} \frac{(b-q)R(a)+(q-a)R(b)}{b-a}.
\end{align*}
Intuitively, $\tR(q)$ is the maximum revenue that can be obtained by selling a single good to a single buyer using a lottery over prices so that the probability of sale is exactly~$q$. 

A \emph{sequential posted-price mechanism} considers each of the buyers in turn, and as long as a good is available offers the current buyer a price at which a good can be purchased. The buyer purchases a good if its valuation exceeds the price it is being offered, otherwise it leaves without a good. The price offered to buyer~$i$ may depend on the realized values~$v_j$ of buyers $j<i$ and on distribution~$F$, but not on the realized value~$v_i$ of buyer~$i$. The revenue of a mechanism then is the expected sum of the prices at which the goods are sold, where the expectation is taken over the joint distribution of buyers' valuations. 

We distinguish between dynamic prices, which may depend both on~$i$ and the number~$j$ of goods still available when buyer~$i$ arrives, and static prices, which may depend on neither of the two. \emph{Dynamic prices} can be written as an $n\times k$ matrix $\vec p=(p_{i,j})_{i\in[n],j\in[k]}$, where $p_{i,j}$ is the price offered to buyer~$i$ when~$j$ goods are still available. Denoting by $\Q_{i,j}(\vec p)$ the probability that exactly~$j$ goods are available when buyer~$i$ is considered, the revenue obtained by $\vec p$ is \begin{align*}
	\Rdnk(\vec p) = \sum_{i\in[n]} \sum_{j \in [k]} p_{i,j} \bigl(1- \F(p_{i,j})\bigr) \Q_{i,j}(\vec p).
\end{align*}
A \emph{static price}~$p\in\R$ obtains a revenue of
\begin{align*}
	\Rsnk(p) = \sum_{j = 1}^n \min\{j,k\} \,p\, \binom{n}{j} \F(p)^{n-j} (1-\F(p))^{j} .
\end{align*}
We denote by $\Rdnk$ and $\Rsnk$ the maximum revenue from static and dynamic prices, \ie 
\begin{align*}
	\Rdnk &= \max_{\mathclap{\vec p\in\R_{\geq 0}^{n \times k}}}\;\Rd_n(\vec p) & \text{and} && \Rsnk &= \max_{\mathclap{p\in\R_{\geq 0}}}\,\Rs_n(p) .
\end{align*}

Both static and dynamic prices are incentive compatible in the sense that they make it optimal for each buyer to implicitly reveal its true valuation, by purchasing a good if and only if the buyer's valuation exceeds the price being offered. We may more generally consider mechanisms which solicit a bid from each buyer and based on the bids determine an allocation of goods to buyers and prices the buyers have to pay. An \emph{incentive-compatible mechanism} then is a mechanism in which it is optimal for each buyer to bid its true valuation assuming that the other buyers do the same, in the sense that this maximizes the expected difference between the buyer's value from being assigned a good and its payment, where the expectation is taken over the distribution of the values of the other buyers.\footnote{Following the seminal work of \citet{Myer81a} on revenue-maximizing incentive-compatible mechanisms for the case of a single good, such mechanisms are also referred to as optimal auctions. We use the standard notion of incentive compatibility where truthful revelation of valuations is a Bayes-Nash equilibrium.} We denote by $\Rm_{n,k}$ the maximum revenue from an incentive-compatible mechanism and note that for all $F$, $n$ and $k$,
\[
	\Rsnk \leq \Rdnk \leq \Rmnk .
\]

For ease of exposition we will assume all three quantities to exist and be finite. All results can be seen to hold in general by standard limit arguments. 

\section{An Ex-Ante Relaxation}
\label{sec:ex-ante}

As the revenue from dynamic prices is rather difficult to analyze directly, it will be useful to consider instead an \emph{ex-ante relaxation} of the associated maximization problem with revenue
\begin{align*}
	\Rxnk = \max\Biggl\{\sum_{i\in[n]} \tR(q_i) \midd \vec q\in[0,1]^n, \sum_{i\in[n]} q_i \leq k \Biggr\}.
\end{align*}
Intuitively this quantity relaxes the requirement that at most~$k$ goods can be sold to only hold in expectation over the random draws from~$F$. 
It is important to note that the definition of~$\Rxnk$ uses the ironed revenue curve~$\tR$. In the case of regular distributions~$R$ and~$\tR$ are identical, so that a revenue of $\Rxnk$ could indeed be obtained by offering a good independently to each buyer at a deterministic price and selling at most~$k$ goods in expectation. In the case of irregular distributions a lottery over prices may be required to achieve the same.

In the following example we see that all four quantities $\Rsnk$, $\Rdnk$, $\Rmnk$, and $\Rxnk$ are distinct and the ex-ante relaxation provides an upper bound not only on the maximum revenue from dynamic prices but also on that from any incentive-compatible mechanism.

\begin{example}[A single good and two buyers with uniform valuations]
\label{ex:uniform}
	Consider a situation with one good and two buyers with valuations drawn independently and uniformly from $[0,1]$, i.e., $F(x) = x$ for $\max\{0,x\}$ for $x \leq 1$ and $F(x) = 1$ for $x > 1$. It is not difficult to see that
	\begin{align*}
		\Rs_{F,2,1} &= \max\Bigl\{p \bigl( 1-p^2\bigr) \midd p \in \R_{\geq 0}\Bigr\} = \frac{2}{3\sqrt{3}} \approx 0.385,\\
		\Rd_{F,2,1} &= \max\Bigl\{ p_{1,1} (1-p_{1,1}) + p_{2,1} (1-p_{2,1})p_{1,1}\midd \vec{p} \in \R_{\geq 0}^{2 \times 1}\Bigr\} = \frac{25}{64} \approx 0.391.
	\end{align*}
	The revenue-optimal incentive-compatible mechanism is a second-price auction with a reserve price $p^*$ that corresponds to a probability maximizing the revenue curve, i.e., $p^* = 1/2$, see also Figure~\ref{fig:uniform} for an illustration \citep{Myer81a}. To compute the revenue of the optimal mechanism let $X_1,X_2$ be drawn from the uniform distribution on $[0,1]$, and let $Y_1=\min\{X_1,X_2\}$ and $Y_2=\max\{X_1,X_2\}$. Then
	\begin{align*}
		\Rm_{F,2,1} &= \frac{1}{2} \Pr\biggl[Y_1 \leq \frac{1}{2}, Y_2 \geq \frac{1}{2}\biggr] + \Pr\biggl[Y_1 \geq \frac{1}{2}\biggr]\E\biggl[Y_1 \;\bigg|\; Y_1 \geq \frac{1}{2}\biggr] = \frac{1}{2}\cdot \frac{1}{2} + \frac{1}{4}\cdot\frac{2}{3} = \frac{5}{12} \approx 0.417.
	\end{align*} 
	For the ex-ante relaxation, we obtain
	\begin{align*}
	\Rx_{F,2,1} = \max \Bigl\{ q_1(1-q_1) + q_2(1-q_2) : \vec q \in [0,1]^2, q_1 + q_2 \leq 1 \Bigr\} = \frac{1}{2} = 0.5.	
	\end{align*}
	We see that indeed in this example $\Rs_{F,2,1} < \Rd_{F,2,1} < \Rm_{F,2,1} < \Rx_{F,2,1}$ as claimed. $\quad \triangle$
\end{example}

The following lemma shows that the ex-ante relaxation is always an upper bound on the optimal auction revenue and, thus, also on the optimal revenue of dynamic pricing. Its uses an argument analogous to those in the proofs of Theorem~1 of \citet{Alaei14} and Lemma~3.1 of \citet{ChMi16a}.
\begin{restatable}{lemma}{lemRelaxation}
\label{lem:relaxation}
Let $n,k\in\N$ with $k\leq n$ and $F$ be arbitrary. Then there exists $\vec q\in[0,1]^n$ such that $\sum_{i=1}^n q_i\leq k$ and $\Rmnk \leq \sum_{i=1}^n \tR(q_i)$.
\end{restatable}

\begin{proof}
	By definition of $\Rmnk$ there exists an incentive-compatible mechanism~$M$ that sells at most~$k$ goods to~$n$ buyers and achieves revenue $\Rm_{n,k}$. Formally $M=(\vec x,\vec p)$ for two functions $\vec x:\R_{\geq 0}^n\to[0,1]^n$ and $\vec p:\R_{\geq 0}^n\to\R_{\geq 0}^n$, such that for any vector $\vec v\in\R_{\geq 0}$ of valuations, $(\vec x(\vec v))_i$ is the probability that~$M$ assigns a good to buyer~$i$ and $(\vec p(\vec v))_i$ is the payment of buyer~$i$. Denote by~$q_i=\E_{\vec v\sim F^n}(\vec x(\vec v))_i$ the probability that mechanism~$M$ sells a good to buyer~$i$ and observe that $\sum_{i=1}^n q_i\leq k$.
	
	For each buyer~$i$ we can construct from~$M$ a mechanism~$M_i$ that sells at most one good to the single buyer~$i$: given report~$v_i$ from buyer~$i$ the mechanism draws $(v_{j})_{j\in[n]\setminus\{i\}}$ from~$F^{n-1}$, assigns a good to buyer~$i$ with probability $x_i(\vec v)$ and charges it a payment of $p_i(\vec v)$.
	
	Denote by~$R_i$ the expected revenue achieved by~$M_i$. Since the expected revenue from buyer~$i$ is the same under $M$ and $M_i$, $\Rmnk=\sum_{i=1}^n R_i$. Mechanisms~$M$ and~$M_i$ are indistinguishable from the point of view of buyer~$i$, so $M_i$ is incentive compatible. This implies that $R_i\leq \tR(q_i)$, because~$M_i$ sells to buyer~$i$ with probability~$q_i$ and because $\tR(q_i)$ is the maximum revenue of any incentive-compatible mechanism that sells to a single buyer with probability~$q_i$~\citep[\eg][]{Hartline17}. In summary
\[
	\Rmnk = \sum_{i=1}^n R_i \leq \sum_{i=1}^n \tR(q_i),
\]
as claimed.
\end{proof}

A useful property of the ex-ante relaxation in the case of identical distributions is the existence of a symmetric optimum, as made precise by the following lemma. %
\begin{restatable}{lemma}{lemSymmetric}  \label{lem:symmetric}  %
	Let $n,k\in\N$ with $k\leq n$ and $F$ be arbitrary. Then there exists $q^*\in[0,k/n]$ such that $\Rxnk=n\tR(q^*)$.
\end{restatable}

\begin{proof}
	The objective of the ex-ante relaxation is a sum of concave functions and therefore concave. Both the objective and the constraints are invariant under permutations of the set of buyers. Consider an optimal solution~$\vec q$ of the ex-ante relaxation, and let $\bar{\vec q} = \frac{1}{n!} \sum_{\pi\in S_n} (P_{\pi} \vec q)$, where~$S_n$ is the set of permutations of~$[n]$ and~$P_\pi$ is the permutation matrix corresponding to permutation~$\pi$. Clearly $\bar{\vec q}$ is feasible. Moreover $\bar{\vec q}$ is invariant under permutations, so there must exist some $\alpha\in\R_{\geq 0}$ such that $\bar{\vec q}=\alpha\vec{1}$. In fact, as~$\bar{\vec q}$ is feasible, $\alpha\leq k/n$. Finally, by concavity of the objective and by Jensen's inequality,
\begingroup 
	\begin{align*}
	\sum_{i \in [n]} R(\bar{q}_i) &= \sum_{i \in[n]} R\biggl(\frac{1}{n!} \sum_{\pi \in S_n} (P_\pi \vec q)_i\biggr) 
	\geq \sum_{i \in [n]} \frac{1}{n!} \sum_{\pi \in S_n} R\bigl((P_\pi \vec q)_i\bigr) \\
	&= \sum_{\mathclap{\pi \in S_n}} \, \frac{1}{n!} \sum_{i \in [n]} R\bigl((P_\pi \vec q)_i\bigr) 
	= \sum_{\mathclap{\pi \in S_n}} \, \frac{1}{n!} \sum_{i \in [n]} R(q_i) = \sum_{\mathclap{i \in [n]}} R(q_i),
	\end{align*} 
	so $\bar{\vec q}$ is optimal.
\endgroup
\end{proof}

\section{Regular Distributions}
\label{sec:regular}

Our first objective will be to bound the revenue gap between dynamic and static prices for the special case of regular distributions. We obtain the following upper bound, which will turn out to be tight for all values of~$n$ and~$k$.  \par

\begin{theorem}  \label{thm:regular_upper}
	For any number~$n$ of buyers with values drawn independently from a regular distribution~$F$ and any number $k\leq n$ of goods,
\[
	\frac{\Rxnk}{\Rsnk} 
	\leq \frac{k}{\E_{X \sim \mathsf{Bin}(n,k/n)}[\min\{X,k\}]} ,
\]
where $\mathsf{Bin}(n,k/n)$ denotes a binomial distribution with $n$ trials and success probability $k/n$.
\end{theorem}
Before proving this result we also give a simplified bound that can be obtained from well-known properties of the approximation of a Poisson distribution by a series of binomial distributions. The development is standard and summarized in \appref{app:poisson} for completeness.
\begin{restatable}{corollary}{corPoisson}  \label{cor:poisson}
	For any number~$n$ of buyers with values drawn independently from a regular distribution~$F$ and any number $k\leq n$ of goods,
	\[
		\frac{\Rxnk}{\Rsnk} \leq \frac{1}{1 - \frac{k^k}{e^{k}k!}} .
	\]
\end{restatable}
\noindent When~$k$ is large, then $k!\approx\sqrt{2\pi k}\bigl(\frac{k}{e}\bigr)^{\!k}$ by Stirling's formula, so $\Rxnk/\Rsnk \approx 1/(1- \frac{1}{\sqrt{2 \pi k}})$. 

\subsection{Prerequisites for the Proof of \thmref{thm:regular_upper}}

To prove \thmref{thm:regular_upper} we consider a symmetric optimum of the ex-ante relaxation, which exists by \lemref{lem:symmetric}. At a symmetric solution the value of the ex-ante relaxation becomes equal to $nqp$, where~$q$ is the probability of selling to each individual buyer and~$p$ is a price that leads to this probability of sale. If~$p$ is used instead as a static price, it produces a revenue of~$p$ times the expected number of goods sold. The latter is given by a binomially distributed random variable with~$n$ trials and success probability~$q$, where in addition the value is capped at the number~$k$ of available goods. To formalize this let $g_{n,k}:[0,1]\to\R_{\geq 0}$ and $f_{n,k}:[0,1]\to\R_{\geq 0}$ such that for all $q\in[0,1]$,
	\begin{align*}
		g_{n,k}(q) &= \E_{X \sim \mathsf{Bin}(n,q)}[ \min\{X,k\}], \\
		 f_{n,k}(q) &= \begin{cases}
 \frac{nq}{\E_{X \sim \mathsf{Bin}(n,q)}[ \min\{X,k\}]}	&\text{if $q>0$ and,} \\
 1 &\text{if $q=0$}.
 \end{cases}
	\end{align*}
We will use the following non-trivial monotonicity result for $g_{n,k}$ and $f_{n,k}$.
\begin{restatable}{lemma}{bernsteinOne}  \label{lem:bernstein1}
	Let $n,k\in\N$ with $k\leq n$. Then $g_{n,k}$ and $f_{n,k}$ are continuous and non-decreasing.
\end{restatable}

To prove the lemma it is useful to view the expectation $\E_{X\sim\mathsf{Bin}(n,q)}[ \min \{X,k\}]$ as an approximation of the function $h(q)=\min\{q,k\}$ by Bernstein polynomials. By the Stone-Weierstrass theorem every continuous function on a closed interval can be uniformly approximated with arbitrary precision by polynomials. The approximation of a function $f$ by its Bernstein polynomial $\Bern_n(f ; \cdot)$ is mathematically well-behaved and in particular preserves properties such as monotonicity or concavity~\citep[\eg][\S~7]{Phil2003}. 
Given a function $f:[0,1]\to\R$ and $n\in\N_{>0}$, the Bernstein polynomial of~$f$ of degree~$n$ is the function $\Bern_n(f; \cdot):[0,1]\to\R$ such that for all $q \in [0,1]$,
\begin{align*}
	\Bern_n(f; q) = \sum_{j=0}^n f\biggl(\frac{j}{n}\biggr) \binom{n}{j} q^j (1-q)^{n-j} .
\end{align*}
Since $h(q) = \min\{q,k/n\}$ is non-decreasing and concave, so is
\begin{align*}
	\Bern_n(h;q) = \sum_{j=0}^{n} \min\{j,q\} \binom{n}{j} q^j (1-q)^{n-j} = \E_{X \sim \Bin(n,q)}[\min\{X,k\}]	= g_{n,k}(q),
\end{align*}
With this knowledge, monotonicity of $f_{n,k}$ can be shown by standard calculus. A formal proof of \lemref{lem:bernstein1} is given in \appref{app:bernstein1}.

\subsection{Proof of \thmref{thm:regular_upper}}

\begin{proof}
By \lemref{lem:symmetric} the ex-ante relaxation has a symmetric optimum, \ie there exists $q^*\in[0,k/n]$ such that $\Rxnk=n\tilde{R}(q^*)$. By regularity of the distribution the revenue curve~$R$ is concave, so $\tilde{R}(q^*)=R(q^*)$. Letting $p^* = \F^{-1}(1-q^*)$, we thus obtain $\Rxnk = nR(q^*)= n q^* \F^{-1}(1-q^*) = n q^* p^*$. On the other hand
\begin{align*}
	\Rsnk(p^*) 
	&= p^* \,\, \E_{X \sim \mathsf{Bin}(n,1-\F(p^*))}\bigl[\min\{X,k\}\bigr] \\
	& \geq p^* \, \min\nolimits_{z\in[q^*\!\!,\,1]} \E_{X \sim \mathsf{Bin}(n,z)}\bigl[\min\{X,k\}\bigr] \\
	&= p^* \,\, \E_{X \sim \mathsf{Bin}(n,q^*)}\bigl[\min\{X,k\}\bigr] ,
\end{align*}
where the inequality holds because $p^* = \max \{p \geq 0 \midd \F(p) \leq 1-q^*\}$ and thus $1-\F(p^*)\in[q^*,1]$, and the second equality by~\lemref{lem:bernstein1}. Thus
\begin{align*}
	\frac{\Rxnk}{\Rsnk} 
	\leq \frac{\Rxnk}{\Rsnk(p^*)}
	\leq \frac{nq^*p^*}{p^*\E_{X \sim \mathsf{Bin}(n, q^*)}[\min\{X,k\}]}
	= \frac{nq^*}{\E_{X \sim \mathsf{Bin}(n, q^*)}[\min\{X,k\}]}.
\end{align*}
By \lemref{lem:bernstein1} the right-hand side of this expression is non-decreasing in $q^*$, and since $q^*\leq k/n$ we conclude that
\begin{align*}
	\frac{\Rxnk}{\Rsnk}	& \leq \frac{k}{\E_{X \sim \mathsf{Bin}(n, k/n)}[\min\{X,k\}]},
\end{align*}
as claimed.
\end{proof}

\subsection{A Matching Lower Bound}

To obtain a lower bound matching the upper bound of \thmref{thm:regular_upper} we consider a family of distributions parameterized by $r\in[0,1]$ and $\epsilon>0$ that are regular and whose revenue curves are piece-wise linear and bend at $\epsilon$. The limiting revenue curve~$R$ as $\epsilon\to 0$ is linear on $(0,1]$ and satisfies $R(0) = 0$, $R(1)=1$ and $\lim_{q\downarrow 0}R(q)=r$. Examples of these distributions and their revenue curves are shown in \figref{fig:regular_lower}.
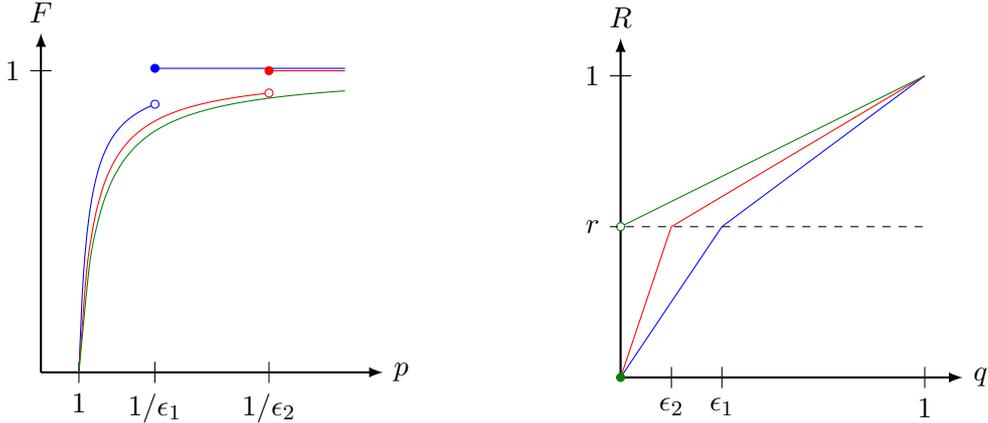
\begin{figure}[tb]
\centering
\begin{subfigure}{0.4\textwidth}
\begin{tikzpicture}[yscale=4,xscale=0.5]
	\draw[-latex,thick] (0,0) -- (9,0) node[right]{$p$};
	\draw (0,1) node[left,xshift=-4pt] {$1$};
	\draw (-8pt,1) -- (8pt,1);
	\draw[-latex,thick] (0,0) -- (0,1.125) node[above] {$F$};
	\draw (1,-1pt) -- (1,1pt);
	\draw (1,0) node[below,yshift=-4pt]{$1$};
	\draw (3,-1pt) -- (3,1pt);
	\draw (3,0) node[below,yshift=-4pt]{$1/\epsilon_1$};
	\draw (6,-1pt) -- (6,1pt);
	\draw (6,0) node[below,yshift=-4pt]{$1/\epsilon_2$};
	\draw[domain=1:3,smooth,variable=\p,blue] plot ({\p},{1 - (1 - 0.5/0.667)/(\p - 0.5/0.667)});
	\draw (3,0.889) node[draw=blue,fill=white,inner sep=0pt,circle,minimum size=3pt] {};
	\draw (3,1.008) node[draw=blue,fill=blue,inner sep=0pt,circle,minimum size=3pt] {};
	\draw[domain=3:8,variable=\p,blue] plot ({\p}, {1.008});
	\draw[domain=1:6,smooth,variable=\p,red] plot ({\p},{1 - (1 - 0.5/0.833)/(\p - 0.5/0.833)});
	\draw (6,0.926) node[draw=red,fill=white,inner sep=0pt,circle,minimum size=3pt] {};
	\draw (6,1) node[draw=red,fill=red,inner sep=0pt,circle,minimum size=3pt] {};
	\draw[domain=6:8,variable=\p,red] plot ({\p}, {1});
	\draw[domain=1:8,smooth,variable=\p,green!50!black] plot ({\p},{1 - (1 - 0.5/1)/(\p - 0.5/1)});
 \end{tikzpicture}
 \end{subfigure}  \hspace{1cm}
 \begin{subfigure}{0.4\textwidth}
 \begin{tikzpicture}[yscale=4,xscale=4]
       \draw[-latex,thick] (0,0) -- (1.125,0) node[right]{$q$};
       \draw (-1pt,1) -- (1pt,1);
       \draw (0,1) node[left,xshift=-4pt]{$1$};
 			\draw[-latex,thick] (0,0) -- (0,1.125) node[above] {$R$};
 		\draw[domain=0:0.333,variable=\q,blue] plot ({\q},{\q * 0.5/0.333});
 		\draw[domain=0.333:1,variable=\q,blue] plot ({\q},{\q * 0.5/0.667 + (1 - 0.5/0.667)});
 		\draw[domain=0:0.167,variable=\q,red] plot ({\q},{\q * 0.5/0.167});
 		\draw[domain=0.167:1,variable=\q,red] plot ({\q},{\q * 0.5/0.833 + (1 - 0.5/0.833)});
 		\draw[domain=0:1,variable=\q,green!50!black] plot ({\q},{\q * 0.5/1 + (1 - 0.5/1)});
 		\draw (0,0) node[draw=green!50!black,fill=green!50!black,inner sep=0pt,circle,minimum size=3pt] {};
 		\draw (0,0.5) node[draw=green!50!black,fill=white,inner sep=0pt,circle,minimum size=3pt] {};
 		\draw[dashed] (-1pt,0.5) -- (1,0.5);
 		\draw (0,0.5) node[left,xshift=-4pt]{$r$};
 		\draw (0.333,-1pt) -- (0.333,1pt);
 		\draw (0.333,0) node[below,yshift=-4pt] {$\epsilon_1$};
 		\draw (0.167,-1pt) -- (0.167,1pt);
 		\draw (0.167,0) node[below,yshift=-4pt] {$\epsilon_2$};
 		\draw (1,-1pt) -- (1,1pt);
 		\draw (1,0) node[below,yshift=-4pt]{$1$};
 	\end{tikzpicture}
 \end{subfigure}
 \caption{Cumulative distribution functions (left) and revenue curves (right) for probability distributions used in the proof of \thmref{thm:regular_lower}. The blue and red curves correspond to the cases where $r=1/2$ and where $\epsilon=\epsilon_1=1/3$ and $\epsilon=\epsilon_2=1/6$, respectively. The green curve corresponds to the limit as $\epsilon\to 0$.}
 \label{fig:regular_lower}
 \end{figure}

Using dynamic prices it is possible to sell a good with probability close to~$0$ to each of the first $n-k$ buyers and with probability close to~$1$ to each of the remaining~$k$ buyers, thus obtaining an overall revenue close to $(n-k)r+k$. A static price that sells with probability~$q$ to an individual buyer obtains revenue at most $r+q(1-r)$ for each good sold, and the number of such goods is $\E_{X\sim\mathsf{Bin}(n,q)}\min\{X,k\}$. Our goal now is to choose~$r$ such that for the corresponding optimal choice of~$q$ the gap in revenue between dynamic and static prices is as large as possible. This maximin problem turns out to be rather difficult to solve directly, but we will appeal to Berge's maximum principle \citep{Berge1963} to show the existence of a value of $r$ that leads to an optimal choice of $q=k/n$. By the upper bound of \thmref{thm:regular_upper} and by the existence of a symmetric optimum of the ex-ante relaxation the latter is in fact necessary for tightness.

We obtain the following result, which is proved formally in \appref{app:regular_lower}.
\begin{restatable}{theorem}{regularLower}  \label{thm:regular_lower}
Let $n,k\in\N$ with $k\leq n$ and $\delta>0$. Then there exists a regular distribution~$F$ such that
\[
	\frac{\Rdnk}{\Rsnk} \geq  \frac{k}{\E_{X \sim \mathsf{Bin}(n, k/n)}[\min\{X,k\}]} - \delta .
\]
\end{restatable}

\section{General Distributions}
\label{sec:general}

We proceed by bounding the revenue gap between dynamic and static prices also for general distributions. The following upper bound, which will again turn out to be tight for all values of $n$ and $k$, reveals a qualitative difference between the gap for distributions that are regular and the gap for those that are not.
\begin{theorem}  \label{thm:general-upper}
	For any number~$n$ of buyers with values drawn independently from a distribution~$F$ and any number $k\leq n$ of goods,
\[
	\frac{\Rxnk}{\Rsnk} \leq 2-\frac{k}{n}.
\]
\end{theorem}

To readers familiar with the ironing technique of \cite{Myer81a}, which in the context of incentive-compatible mechanisms enables a uniform treatment of regular and irregular distributions, it may seem surprising at first that the two types of distributions would give rise to different bounds. Before proving \thmref{thm:general-upper} we therefore demonstrate that a straightforward application of the ironing technique can bound the revenue from dynamic prices only relative to that from a \emph{static lottery} over at most two prices and not relative to that from a single static price. Such a lottery can be defined formally in terms of $p_1,p_2\in\R$ and $\beta\in [0,1]$ and offers a good to each buyer at price $p_1$ with probability $\beta$ and at price $p_2$ with probability $1-\beta$. Denoting by $\Rl_{F,n,k}$ the maximum revenue that can be achieved by a static lottery over two prices for~$n$ buyers and~$k$ goods, it can be shown with the same arguments as in \secref{sec:regular} that for any $n,k\in\N$ with $k\leq n$ and even for irregular distributions,
\[
	\frac{\Rxnk}{\Rl_{F,n,k}} \leq \frac{k}{\E_{X \sim \Bin(n,k/n)}[\min\{X,k\}]}	.
\]
The crucial observation is that in the irregular case, a symmetric solution of the ex-ante relaxation that sells to each buyer with probability~$q$ may not correspond to a single price but rather to a lottery over two prices with an overall probability of sale equal to~$q$. Since static prices and static lotteries over prices are equally powerful for regular distributions, and in particular for the family of distributions used in the proof of \thmref{thm:regular_lower}, the bound is tight for all values of~$n$ and~$k$. A straightforward derandomization of a static lottery over two prices using the method of conditional expectations does not lead to a bound like that of \thmref{thm:general-upper} because it may offer \emph{two distinct prices} to different subsets of the buyers. \thmref{thm:general-upper}, by contrast, holds relative to a single static price.

\subsection{Prerequisites for the Proof of \thmref{thm:general-upper}}

To prove \thmref{thm:general-upper} we again start with a symmetric optimum of the ex-ante relaxation, which is guaranteed to exist by \lemref{lem:symmetric}. In \secref{sec:regular} we were ultimately able to express the revenue gap between a symmetric optimum that sells to each buyer with probability~$q$ in terms of the function
\begin{align*}
	f_{n,k}(q) = \begin{cases}
		\frac{nq}{\E_{X \sim \Bin(n,q)}[\min\{X,k\}]} & \text{if $q>0$ and} \\
		1 & \text{if $q=0$}.
	\end{cases}
\end{align*}
We then established monotonicity of $f_{n,k}$ and concluded that the revenue gap is maximized when $q=k/n$.

The reason why this argument does not carry over to irregular distributions is that for such distributions a symmetric solution of the ex-ante relaxation may not correspond to a single static price~$p$ but rather to a static lottery over prices $p_1$ and $p_2$. To nevertheless prove an upper bound on the revenue gap in this case we will instead consider a carefully chosen lottery over the static prices~$p_1$ and~$p_2$, \ie a lottery that selects one of these prices randomly and then offers it to all of the bidders. Our first objective will be to show that the lottery can be chosen in such a way that its revenue gap relative to the ex-ante relaxation is bounded by a convex combination of the form
\begin{align*}
	\lambda_1 f_{n,k}(q_1) + \lambda_2 f_{n,k}(q_2)
\end{align*}
under the constraint that $\lambda_1q_1+\lambda_2q_2=k/n$. We will then argue that this gap is maximized when $q_1=0$ and $q_2=1$. Together with monotonicity of~$f_{n,k}$, the latter would follow from its convexity, but a proof of convexity unfortunately remains elusive. We will show instead that $f_{n,k}$ lies below its secant line, \ie below the function $s_{n,k}:[0,1]\to\R_{\geq 0}$ such that for all $q \in [0,1]$,
\begin{align*}
	s_{n,k}(q) &= (1-q)f_{n,k}(0) + qf_{n,k}(1) .
\end{align*}
This is established by the following lemma.

\begin{restatable}{lemma}{lemBernsteinTwo}
\label{lem:bernstein2}
	Let $n,k\in \N$ such that $k\leq n$. Then, for all $q \in [0,1]$, $f_{n,k}(q)\leq s_{n,k}(q)$.
\end{restatable}

To prove the lemma we again exploit the close connection between $f_{n,k}$ and approximations by Bernstein polynomials. For $\alpha>0$ let $h_{\alpha}(q)=\min\{q,\alpha\}$, and observe that for \lemref{lem:bernstein2} it suffices to show that for all $\alpha>0$ and all $n\geq\lceil 1/\alpha\rceil$,
\begin{align*}
	\frac{q}{\Bern_n(h_{\alpha}; q)} \leq (1-q) + \frac{q}{\alpha}.
\end{align*}
Indeed, the lemma follows from this claim by setting $\alpha=k/n$. For $n=\lceil 1/\alpha\rceil$ the Bernstein polynomial $\Bern_n(h_\alpha;q)$ has a very simple structure since $\min\{j/n,\alpha\}=\alpha$ for all $j\geq 1$, and the claim can be shown using a Bernoulli-type inequality. To show that the claim holds for $n\geq\lceil 1/\alpha \rceil$ we use that the approximation of a concave function~$h$ by Bernstein polynomials is monotone in~$n$ in the sense that for all $q\in[0,1]$, $\Bern_{n}(h;q)\leq\Bern_{n+1}(h;q)\leq h(q)$. A detailed proof of \lemref{lem:bernstein2} is given in \appref{app:bernstein2}. 

\subsection{Proof of \thmref{thm:general-upper}}

\begin{proof}
By \lemref{lem:symmetric} and by definition of~$\tR$ there respectively exist $q^*\in[0,k/n]$ such that $\smash[b]{\Rxnk}=n\tilde{R}(q^*)$ and $q_1,q_2,\beta\in[0,1]$ such that $\tilde{R}(q^*)=\beta R(q_1)+(1-\beta)R(q_2)$. 
Thus
\begin{align*}
	\Rxnk 
	= n\bigl(\beta R(q_1) + (1-\beta)R(q_2) \bigr) .
\end{align*}
Since $q^*\leq k/n$, it further holds that  $\beta q_1+(1-\beta)q_2\leq k/n$.

Let 
\begin{align*}
	\alpha_1 &= \frac{\beta R(q_1)}{\Rsnk(\F^{-1}(1-q_1))} & 	& \text{and} & \alpha_2 &= \frac{(1-\beta) R(q_2)}{\Rsnk(\F^{-1}(1-q_2))},
\end{align*}
and consider a lottery that selects a static price of~$p_1 = \F^{-1}(1-q_1)$ with probability proportional to~$\alpha_1$ and a static price of~$p_2 = \F^{-1}(1-q_2)$ with probability proportional to~$\alpha_2$.
This lottery achieves an expected revenue of
\begin{align*}
	\frac{\alpha_1}{\alpha_1+\alpha_2}	\Rsnk(p_1) + \frac{\alpha_2}{\alpha_1+\alpha_2}	\Rsnk(p_2) = 
	\frac{\beta R(q_1)}{\alpha_1+\alpha_2} + \frac{(1-\beta)R(q_2)}{\alpha_1+\alpha_2} = \frac{\Rxnk}{n(\alpha_1+\alpha_2)} .
\end{align*}
Since a lottery over the static prices~$p_1$ and~$p_2$ achieves a revenue of $\Rxnk/n(\alpha_1 + \alpha_2)$, so does one of the static prices, \ie either $\Rxnk/\Rsnk(p_1)\leq n(\alpha_1 + \alpha_2)$ or $\Rxnk/\Rsnk(p_2)\leq n(\alpha_1 + \alpha_2)$. In particular we obtain
\begin{multline*}
	\frac{\Rxnk}{\Rsnk} \leq n(\alpha_1 + \alpha_2) 
	\leq \\
	\sup\Biggl\{ \frac{\beta nR(q_1)}{\Rsnk(\F^{-1}(1-q_1))} + 	\frac{(1-\beta)nR(q_2)}{\Rsnk(\F^{-1}(1-q_2))} \midd \beta,q_1,q_2\in [0,1], \beta q_1 + (1-\beta)q_2\leq\frac{k}{n} \Biggr\} .
\end{multline*}
Now, for $q\in\{q_1,q_2\}$,
\begin{align*}
	\frac{n R(q)}{\Rsnk(\F^{-1}(1-q))} 
	&= \frac{n q \F^{-1}(1-q)}{\F^{-1}(1-q)\E_{X \sim \mathsf{Bin}(n,q)}[\min\{X,k\}]} %
	= \frac{n q}{\E_{X \sim \mathsf{Bin}(n,q)}[\min\{X,k\}]} = f_{n,k}(q) ,
\end{align*}
and thus
\begin{align*}
	\frac{\Rxnk}{\Rsnk} &\leq \sup \Bigl\{ \beta f_{n,k}(q_1) + (1-\beta) f_{n,k}(q_2) \midd \beta,q_1,q_2\in[0,1], \beta q_1 + (1-\beta)q_2\leq\frac{k}{n} \Bigr\}.
\end{align*}

We may assume without loss of generality that $q_1\leq q_2$. Since, by \lemref{lem:bernstein1}, $f_{n,k}(q)$ is non-decreasing in $q$, we may further increase $q_2$ until the constraint $\beta q_1+(1- \beta)q_2\leq k/n$ becomes tight and therefore assume without of loss of generality that $q_1\leq k/n$, $q_2>k/n$, and $\beta q_1 + (1-\beta)q_2=k/n$. Thus
\begin{align*}
	\frac{\Rxnk}{\Rsnk} &\leq \sup \Biggl\{ \frac{q_2-\frac{k}{n}}{q_2-q_1}f_{n,k}(q_1) + \frac{\frac{k}{n} - q_1}{q_2-q_1} f_{n,k}(q_2)\midd q_1 \in \Bigl[0,\frac{k}{n}\Bigr], q_2 \in \Bigl(\frac{k}{n},1\Bigr] \Biggr\}.
\end{align*}
Geometrically the expression in the supremum corresponds to the value at $k/n$ of the secant line of the function $f_{n,k}$ through the points $(q_1, f_{n,k}(q_1))$ and $(q_2, f_{n,k}(q_2))$. By \lemref{lem:bernstein1} $f_{n,k}$ is non-decreasing and by \lemref{lem:bernstein2} it lies below the secant line through the points $(0,f_{n,k}(0))$ and $(1,f_{n,k}(1))$, so the supremum is attained when $q_1=0$ and $q_2=1$. Since $f_{n,k}(0)=1$ and $f_{n,k}(1)=n/k$,
\begin{align*}
\frac{\Rxnk}{\Rsnk} &\leq	1-\frac{k}{n} + 1 = 2- \frac{k}{n} 
\end{align*}
as claimed.
\end{proof}

\subsection{A Matching Lower Bound}

In search of a lower bound matching the upper bound of \thmref{thm:general-upper} we may restrict our attention to distributions where the optimal static price extracts revenue~$1$ from each of~$k$ buyers. The proof of \thmref{thm:general-upper} suggests, on the other hand, that the optimal dynamic prices should sell with vanishing probability to $n-k$ of the buyers while still extracting revenue $k/n$ from each of them in expectation, and with probability and expected revenue approaching~$1$ for each of the remaining~$k$ buyers. We show in \appref{app:lower} that this can be achieved for a discrete distribution with support of size two, and obtain the following result.
\begin{restatable}{theorem}{irregularLower}  \label{thm:irregular-lower}
	Let $n,k\in\N$ with $k\leq n$ and $\delta>0$. Then there exists a distribution~$F$ such that
\[
	\frac{\Rdnk}{\Rsnk} \geq 2-\frac{k}{n}-\delta .
\]
\end{restatable}

\appendix

\section{Auxiliary Lemmas for \secref{sec:regular}}
\label{app:regular}

\subsection{Proof of \corref{cor:poisson}}
\label{app:poisson}

\corPoisson*

To prove \corref{cor:poisson}, we need the following auxiliary lemma.
\begin{lemma}  \label{lem:poisson_monotone}
	Let $n,k\in\N$ such that $k\leq n$. Then
\begin{align*}
	\E_{X \sim \mathsf{Bin}(n,k/n)}[\min\{X,k\}] \geq \E_{X \sim \mathsf{Poi}(k)} [\min\{X,k\}].
\end{align*}
\end{lemma}
\begin{proof}
	For $D\in\{\mathsf{Bin}(n,k/n),\mathsf{Poi}(k)\}$,
\begin{align*}  %
	\E_{X\sim D}[\min\{X,k\}] 
	&= \sum_{j=1}^{k} \P_{X\sim D}[X \geq j]
	=\sum_{j=0}^{k-1} \Bigl(1-\P_{X\sim D}[X \leq j]\Bigr).  
\end{align*}
	By a result of \citet{AndersonS67} $\P_{\mathsf{Bin}(n,k/n)}(X \leq j)\leq\P_{\mathsf{Poi}(k)}(X \leq j)$ precisely when $0\leq j\leq k-k/(n+1)$. Since $k\leq n$, this condition is satisfied for all terms in the sum on the right-hand side, and thus
\begin{align*}
	\E_{X \sim \mathsf{Bin}(n,k/n)}[\min\{X,k\}] 
	&=\sum_{j=0}^{k-1} \Bigl( 1-\P_{\mathsf{Bin}(n,k/n)}[X \leq j] \Bigr) \\
	&\geq \sum_{j=0}^{k-1} \Bigl( 1-\P_{\mathsf{Poi}(k)}[X \leq j] \Bigr) \\
	&= \E_{X \sim \mathsf{Poi}(k)}[\min\{X,k\}],
\end{align*}
as claimed. 
\end{proof}

\begin{proof}[Proof of \corref{cor:poisson}]
We claim that 
\begin{align*}
	\frac{\Rxnk}{\Rsnk} 
	&\leq \frac{k}{\E_{X \sim \Poi(k)}[\min\{X,k\}]} 
	= \frac{1}{1 - \frac{k^k}{e^{k}k!}} .
\end{align*}
Indeed the inequality holds by \lemref{lem:poisson_monotone}, and the equality because
\begin{align*}
	\E_{X \sim \mathsf{Poi}(k)}[\min\{X,k\}] &= \sum_{j=0}^{\infty} \frac{k^j}{e^{k}j!} \min\{j,k\} \\
	&= k - \sum_{j=0}^{k-1} \frac{k^j}{e^{k}j!} \bigl(k-j\bigr)\\
	&= k - \sum_{j=0}^{k-1} \frac{k^{j+1}}{e^{k}j!} + \sum_{j=1}^{k-1} \frac{k^{j}}{e^{k}(j-1)!}\\
	&= k - \sum_{j=0}^{k-1} \frac{k^{j+1}}{e^{k}j!} + \sum_{j=0}^{k-2} \frac{k^{j+1}}{e^{k}j!} = k - \frac{k^k}{e^{k}(k-1)!} .   
\end{align*}
\end{proof}

\subsection{Proof of \lemref{lem:bernstein1}}
\label{app:bernstein1}

Recall that
\begin{align*}
	g_{n,k}(q) &= \E_{X \sim \mathsf{Bin}(n,q)}[ \min\{X,k\}], \\
	f_{n,k}(q) &= \begin{cases}
		\frac{nq}{\E_{X \sim \mathsf{Bin}(n,q)}[ \min\{X,k\}]}	&\text{if $q>0$ and,} \\
 	 	1 &\text{if $q=0$}.
 \end{cases}
\end{align*}

\bernsteinOne*

\begin{proof}
	Continuity of $g_{n,k}$ on $[0,1]$ and of $f_{n,k}$ on $(0,1]$ is obvious. For continuity of $f_{n,k}$ at $q=0$ note that
\begin{align*}
	\lim_{q \to 0}f_{n,k}(q) 
	&= \lim_{q \to 0} \frac{nq}{\sum_{j=1}^{n} \min\{j,k\} \binom{n}{j} q^{j} (1-q)^{n-j}} \\
	&= \lim_{q \to 0} \frac{1}{\sum_{j=1}^{n} \min\bigl\{\frac{j}{n},\frac{k}{n}\bigr\} \binom{n}{j} q^{j-1} (1-q)^{n-j}} 
	= \frac{1}{\min\{\frac{1}{n},\frac{k}{n}\}\binom{n}{1}} = 1 ,
\end{align*}
where the third inequality holds because $\lim_{q\to 0}q^{j-1}=1$ if $j=1$ and $\lim_{q\to 0}q^{j-1}=0$ if $j>1$.

To see that $g_{n,k}$ is non-decreasing, recall the definition of the Bernstein polynomial of degree  $n\in\N_{>0}$ of a function $f:[0,1]\to\R$ as the function $\Bern_n(f;\cdot):[0,1]\to\R$ such that for all $q\in[0,1]$,
\begin{align*}
	\Bern_n(f; q) = \sum_{j=0}^n f\biggl(\frac{j}{n}\biggr) \binom{n}{j} q^j (1-q)^{n-j} .
\end{align*}
Let $h:[0,1]\to\R$ such that $h(x)=\min\{x,k/n\}$, and observe that $g_{n,k}(q)=n\Bern_n(h;q)$. Since~$h$ is non-decreasing and concave and approximation by Bernstein polynomials preserves these properties~\citep[\S~7]{Phil2003}, $g_{n,k}=n\Bern_n(h; q)$ is non-decreasing and concave.

By continuity of $f_{n,k}$ it suffices to show that $f_{n,k}$ is non-decreasing on $(0,1)$. For $q>0$, $f_{n,k}(q) = \frac{nq}{g_{n,k}(q)}$. Since
\begin{align*}
	g_{n,k}(q) = n\sum_{j=0}^n \min\biggl\{\frac{j}{n},\frac{k}{n}\biggr\} \binom{n}{j} q^j (1-q)^{n-j}	,
\end{align*}
both $g_{n,k}$ and $f_{n,k}$ are differentiable on $(0,1)$. In particular, for $q\in(0,1)$,
\begin{align*}  %
	f_{n,k}'(q) = n\frac{g_{n,k}(q) - qg_{n,k}'(q)}{g_{n,k}(q)^2}.
\end{align*}
Fix $q\in(0,1)$. Since $g_{n,k}(0)=0$ and by the mean value theorem, there exists some $\xi\in(0,q)$ such that $g'(\xi)=g_{n,k}(q)/q$. Together with concavity of~$g_{n,k}$ this implies that $g_{n,k}(q)=qg'_{n,k}(\xi)\geq qg'_{n,k}(q)$, so $f_{n,k}'$ is non-negative and $f_{n,k}$ non-decreasing on $(0,1)$. 
\end{proof}

\subsection{Proof of \thmref{thm:regular_lower}}
\label{app:regular_lower}

\regularLower*
\begin{proof}
	Let $r\in\R$ and $\epsilon>0$, and consider a distribution with support $[1,\frac{1}{\epsilon}]$ and cumulative distribution function $F$ such that
	\begin{align*}
		F(p) = \begin{cases}
		1 - \frac{1-\frac{1-r}{1-\epsilon}}{p-\frac{1-r}{1-\epsilon}} & \text{if $p\in[1,\frac{1}{\epsilon})$ and} \\
		1 & \text{if $p=\frac{1}{\epsilon}$.}
		\end{cases}
	\end{align*}
	This distribution is continuous on $[1,\frac{1}{\epsilon})$ and has point mass at $\frac{1}{\epsilon}$. Its revenue curve $R$ is piecewise linear with slope $r/\epsilon$ on $(0,\epsilon)$ and slope $\frac{1-r}{1-\epsilon}$ on $(\epsilon,1)$, and is given by
	\begin{equation*}
		R(q) = \begin{cases}
			\frac{r}{\epsilon} \cdot q  &\text{if $q\in[0,\epsilon]$ and} \\
		 	\frac{1-r}{1-\epsilon} \cdot q + (1 - \frac{1-r}{1-\epsilon}) & \text{if $q\in(\epsilon,1)$.}
		\end{cases}
	\end{equation*}

Consider a matrix $\vec p\in\R^{n\times k}$ of dynamic prices such that
\begin{align*}
p_{i,j} &= \F^{-1}(1-\epsilon)= \frac{R(\epsilon)}{\epsilon} = \frac{r}{\epsilon} & &\text{for all $i\in\{1,\dots,n-k\}$, $j\in [k]$}, \\
p_{i,j} &= \F^{-1}(0)=R(1)=1 & &\text{for all $i\in\{n-k+1,\dots,n\}$, $j\in[k]$.}
\end{align*}
The revenue obtained by these prices is at least
\begin{align*}  %
	\sum_{i=1}^{n-k} \bigl((1-\epsilon)^{i-1}R(\epsilon) \bigr)+k(1-\epsilon)^{n-k} R(1) 
	\geq \bigl( (n-k)r+k \bigr) (1-\epsilon)^{n-k},
\end{align*}
because each of the first $n-k$ buyers contributes $R(\epsilon)$ to the revenue at least in the case where no goods have been sold to any of the earlier buyers, and each of the last $k$ buyers contributes $R(1)$ at least in the case where no goods were sold to the first $n-k$ buyers.

In determining the maximum revenue from a static price we can restrict our attention to prices $p\in[1,\frac{1}{\epsilon}]$, as prices below $1$ are dominated in terms of revenue by a price of~$1$ and prices above $\frac{1}{\epsilon}$ sell with probability zero.  Prices $p\in[1,\frac{1}{\epsilon}]$ allow us to sell to a single buyer with probability $q$ for any $q\in[\epsilon,1]$, and obtain an overall revenue of
\begin{align*}  %
	\E_{X \sim \textsf{Bin}(n,q)}[\min\{X,k\}] \frac{R(q)}{q}
	&\leq \E_{X \sim \textsf{Bin}(n,q)}[\min\{X,k\}] \frac{r+q(1-r)}{q} ,
\end{align*}
where the inequality holds because for any $\epsilon>0$, $R(q)\leq r+q(1-r)$. 

For $n,k\in\N$ with $k\leq n$ and $\epsilon>0$, define $T_{n,k,\epsilon}:[\epsilon,1]\times[0,1]\to\R$ such that for all $q\in[\epsilon,1]$ and $r\in[0,1]$,
\begin{align*}
	T_{n,k,\epsilon}(q,r)
	&= %
	 \frac{\bigl( (n-k)r+k \bigr)q}{\E_{X \sim \textsf{Bin}(n,q)}[\min\{X,k\}] \bigl(r+q(1-r)\bigr)} \,(1-\epsilon)^{n-k}.
\end{align*}
Note that $T_{n,k,\epsilon}(q,r)$ is well-defined, since the denominator is greater than zero for all $q\geq\epsilon>0$ and $k\geq 1$, and continuous on its domain. Note further that for all $\epsilon>0$
\[
	\frac{\Rdnk}{\Rsnk} \geq \max\nolimits_{r\in[0,1]}\min\nolimits_{q\in [\epsilon,1]} T_{n,k,\epsilon}(q,r) ,
\]
if indeed the maximin problem has a solution.

The maximin problem turns out to be rather difficult to solve directly, and we will instead give a non-constructive proof of the claimed lower bound. To this end, we first show the existence of $r^*\in[0,1]$ such that $\min_{q\in [0,1]} T_{n,k,\epsilon}(q,r^*)$ is attained at $q=k/n$, \ie that $k/n\in\arg\min_{q \in [0,1]}T_{n,k,\epsilon}(q,r^*)$. For $n,k\in\N$ with $k\leq n$ and $\epsilon>0$, let $\T_{n,k,\epsilon}:[0,1]\to 2^{[\epsilon,1]}$ such that for all $r\in[0,1]$,
\[
	\T_{n,k,\epsilon}(r) = \arg\min\nolimits_{q \in [\epsilon,1]} T_{n,k}(q,r) .
\]
By Berge's maximum theorem \citep{Berge1963} this correspondence is nonempty- and compact-valued and upper hemicontinuous, which implies in particular that the graph 
\[
	G(\T_{n,k,\epsilon}) =  \{(q,r) : r \in [0,1], q \in \T_{n,k,\epsilon}\}
\]
is a closed and connected set in $\R^2$. Note further that 
\begin{align*}
	\T_{n,k,\epsilon}(0) &= \arg\min\nolimits_{q \in [\epsilon,1]} T_{n,k,\epsilon}(q,0) \\
	&= \arg\min\nolimits_{q \in [\epsilon,1]} \biggl\{ \frac{k}{\E_{X \sim \textsf{Bin}(n,q)}[\min\{X,k\}]} (1-\epsilon)^{n-k} \biggr\}=\{1\} ,
\end{align*}
where the last equality holds because $\frac{k}{\E_{X \sim \Bin(n,q)}[\min\{X,k\}]}$ is strictly decreasing in $q$,
and that
\begin{align*}
	\T_{n,k,\epsilon}(1) &= \arg\min\nolimits_{q \in [\epsilon,1]} T_{n,k,\epsilon}(q,1) \\
	&=\arg\min\nolimits_{q \in [\epsilon,1]} \biggl\{ \frac{nq}{\E_{X \sim \textsf{Bin}(n,q)}[\min\{X,k\}]}(1-\epsilon)^{n-k} \biggr\} = \{0\} ,
\end{align*}
where the last equality holds because for all $q>0$, 
\[
	\frac{nq}{\E_{X \sim \textsf{Bin}(n,q)}[\min\{X,k\}]} 
	= \frac{\E_{X \sim\textsf{Bin}(n,q)}[X]}{\E_{X \sim \textsf{Bin}(n,q)}[\min\{X,k\}]}>1 .
\]
Since $G(\T_{n,k,\epsilon})$ is a closed and connected set in $\R^2$ and contains the points $(0,1)$ and $(1,0)$, it must have a nonempty intersection with the line $\{(k/n,r):r \in [0,1]\}$. This intersection yields $r^*\in[0,1]$ such that $k/n\in\T_{n,k,\epsilon}(r^*)$.

Now 
\begingroup 
\begin{align*}
	\max\nolimits_{r \in [0,1]} \min\nolimits_{q \in [\epsilon,1]} T_{n,k,\epsilon}(q,r) &\geq \min\nolimits_{q \in [\epsilon,1]} \Bigl\{ T_{n,k,\epsilon}(q,r^*) \Bigr\}\\
	&= \min\nolimits_{q\in[0,1]} \biggl\{\frac{\bigl((n-k)r^*+k \bigr) q}{\E_{X \sim \textsf{Bin}(n,q)}[\min\{X,k\}] \bigl(r^*+q(1-r^*)\bigr)} (1-\epsilon)^{n-k}\biggr\} \\
	&= \frac{\bigl( (n-k)r^*+k \bigr)\frac{k}{n}}{\E_{X \sim \textsf{Bin}(n,q)}[\min\{X,k\}] \bigl(r^*+\frac{k}{n}(1-r^*)\bigr)}(1-\epsilon)^{n-k} \\
	&= \frac{\bigl( nr^*+(1-r^*)k \bigr)k}{\E_{X \sim \textsf{Bin}(n,q)}[\min\{X,k\}] \bigl(nr^*+(1-r^*)k\bigr)}(1-\epsilon)^{n-k}\\
	&= \frac{k}{\E_{X \sim \textsf{Bin}(n,q)}[\min\{X,k\}]}(1-\epsilon)^{n-k}.
\end{align*}
\endgroup
By continuity of $T_{n,k,\epsilon}$ in~$\epsilon$ and by continuity of the maximum and minimum,
\[
	\max\nolimits_{r \in [0,1]} \min\nolimits_{q \in [\epsilon,1]} \lim\nolimits_{\epsilon\to 0}T_{n,k,\epsilon}(q,r)
	= \lim\nolimits_{\epsilon\to 0} \max\nolimits_{r \in [0,1]} \min\nolimits_{q \in [\epsilon,1]} T_{n,k,\epsilon}(q,r),
\]
so choosing $\epsilon$ small enough establishes the claim. 
\end{proof}

\section{Auxiliary Lemmas for \secref{sec:general}}

\subsection{Proof of \lemref{lem:bernstein2}}
\label{app:bernstein2}

Recall that
\begin{align*}
	f_{n,k}(q) &= \begin{cases}
		\frac{nq}{\E_{X \sim \mathsf{Bin}(n,q)}[\min\{X,k\}]}	&\text{if $q>0$,} \\
 	 	1 &\text{if $q=0$, and}
 \end{cases} \\
	s_{n,k}(q) &= (1-q)f_{n,k}(0) + qf_{n,k}(1) .
\end{align*}

\lemBernsteinTwo*

The proof uses the following variant of Bernoulli's inequality.
\begin{lemma}
\label{lem:bernoulli}
	Let $k\in\N \cup \{0\}$ and $x\in(0,1)$. Then $(1-x)^k \leq (1+kx)^{-1}$.
\end{lemma}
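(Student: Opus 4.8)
The plan is to rewrite the claim in the fraction-free form $(1-x)^k(1+kx) < 1$, which is equivalent since $1+kx > 0$, and then to prove this by induction on~$k$.

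For the base case $k=1$ we simply have $(1-x)(1+x) = 1-x^2 < 1$ because $x\in(0,1)$. For the inductive step, assume $(1-x)^k(1+kx) < 1$. I would factor
\[
	(1-x)^{k+1}\bigl(1+(k+1)x\bigr) = (1-x)^k\cdot(1-x)\bigl(1+(k+1)x\bigr),
\]
and observe that $(1-x)(1+(k+1)x) = 1+kx-(k+1)x^2 \leq 1+kx$, the inequality being strict since $(k+1)x^2>0$. Multiplying by the positive quantity $(1-x)^k$ and then invoking the induction hypothesis yields $(1-x)^{k+1}(1+(k+1)x) < (1-x)^k(1+kx) < 1$, which completes the induction.

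An alternative and essentially equivalent route is to invoke the standard Bernoulli inequality $(1+t)^k \geq 1+kt$ for $t>-1$ with $t = x/(1-x) > 0$, which gives $(1-x)^{-k} \geq 1 + kx/(1-x)$, and then to check by elementary algebra that $1+kx/(1-x) > 1+kx$ (this reduces to $kx^2>0$); hence $(1-x)^{-k} > 1+kx$, i.e.\ $(1-x)^k < 1/(1+kx)$.

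There is no genuine obstacle here: the statement is a routine strengthening of Bernoulli's inequality. The only points requiring minor care are maintaining strict rather than non-strict inequalities throughout — the $-(k+1)x^2$ term is exactly what supplies the strictness — and the $k=1$ case, where the plain Bernoulli inequality degenerates to an equality but the subsequent algebraic step still delivers the strict bound.
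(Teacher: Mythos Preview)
Your proof is correct and follows essentially the same inductive approach as the paper: both argue by induction on~$k$, with the same base case $(1-x)(1+x)=1-x^2<1$, and an inductive step that differs only in algebraic packaging (the paper multiplies the two reciprocal bounds $(1+kx)^{-1}(1+x)^{-1}$ and drops the $kx^2$ term, while you expand $(1-x)(1+(k+1)x)$ and drop the $(k+1)x^2$ term). Your alternative route via the standard Bernoulli inequality with $t=x/(1-x)$ is also valid and slightly more direct.
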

\begin{proof}
	For $k=0$, the statement is trivial. For $k \geq 1$, we show the statement by induction. For $k=1$ the claim is that $1-x\leq(1+x)^{-1}$, which holds when $x>0$. Then, for $k\geq 1$,
	\begin{align*}
		(1-x)^{k+1} &= (1-x)^k(1-x) \\
		&\leq (1+kx)^{-1}(1+x)^{-1}\\
		&= (1+kx+x+kx^2)^{-1} \\
		&\leq \bigl(1+(k+1)x\bigr)^{-1},
	\end{align*}
	where the first inequality uses the induction hypothesis, the second inequality that $kx^2\geq 0$, and both inequalities that $x\geq 0$ and $1-x\geq 0$. 
\end{proof}

\begin{proof}[Proof of \lemref{lem:bernstein2}]
The lemma holds trivially when $k=n$ since $f_{n,n}(q)=1$ for all $q\in[0,1]$. We proceed to show it for the case where $k<n$.

For $\alpha\in(0,1)$ let $h_{\alpha}:[0,1]\to\R$ and $s_{\alpha}:[0,1]\to\R$ such that $h_{\alpha}(q)=\min\{q,\alpha\}$ and $s_{\alpha}(q)=(1-q)+\frac{q}{\alpha}$ for all $q\in[0,1]$. We will show the lemma by showing the stronger claim that for all $\alpha\in(0,1)$, $n\in\N$ with $n\geq\lceil 1/\alpha\rceil$, and $q\in(0,1)$,
\[
	\frac{q}{\Bern_n(h_\alpha; q)}\leq s_{\alpha}(q) .
\]
This claim indeed implies the lemma because for all $n,k\in\N$ with $k<n$, $q\in(0,1)$, and $\alpha=k/n$,
\[
	f_{n,k}(q) = \frac{q}{\Bern_n(h_{k/n}; q)} = \frac{q}{\Bern_n(h_\alpha; q)} \leq s_{\alpha}(q) = s_{n,k}(q) ,
\]
and thus $f_{n,k}(q)\leq s_{n,k}(q)$ for all $q\in[0,1]$ by continuity of $s_{n,k}$ and $f_{n,k}$.

We first show the claim for $n=\lceil 1/\alpha\rceil$, in which case
\begin{align*}
	\frac{1}{n} = \frac{1}{\lceil 1/\alpha \rceil} \leq \frac{1}{1/\alpha} = \alpha  \qquad \text{and} \\ 
	\frac{2}{n} = \frac{2}{\lceil 1/\alpha \rceil} \geq \frac{2}{1/\alpha+1} = \frac{2\alpha}{1 + \alpha} \geq \alpha.
\end{align*}
Then
\begin{align*}
	\Bern_n(h_{\alpha}; q) 
	&= \sum_{j=0}^n h_{\alpha} \biggl(\frac{j}{n}\biggr) \binom{n}{j} q^j (1-q)^{n-j} \\
	&= \sum_{j=0}^n \min\biggl\{\frac{j}{n},\alpha \biggr\} \binom{n}{j} q^j (1-q)^{n-j} \\
	&= 0 + q(1-q)^{n-1} + \sum_{j=2}^n \alpha \binom{n}{j} q^j (1-q)^{n-j} \\
	&= q(1-q)^{n-1} + \alpha \Bigl[ 1 - (1-q)^n - nq(1-q)^{n-1} \Bigr] \\
	&= \alpha \Bigl[ \frac{q}{\alpha}(1-q)^{n-1} + 1 - (1-q)(1-q)^{n-1} - nq(1-q)^{n-1} \Bigr] \\
	&= \alpha \Bigl[ 1 + (1-q)^{n-1} \Bigl( \frac{q}{\alpha} - 1 + q - nq \Bigr) \Bigr] \\
	&= \alpha \Bigl[ 1 - (1-q)^{n-1} \Bigl( 1 + (n-1)q - \frac{q}{\alpha} \Bigr) \Bigr] ,
\end{align*}
and the claim that $\frac{q}{\Bern_n(h_\alpha; q)} \leq s_{\alpha}(q)$ is equivalent to
\begin{align*}
	\frac{q}{\alpha \bigl[1-(1-q)^{n-1}\bigl(1+(n-1)q-\frac{q}{\alpha}\bigr)\bigr]} &\leq 1-q + \frac{q}{\alpha} , \\
	\frac{q}{\bigl((1-q)\alpha+q\bigr) \bigl[1-(1-q)^{n-1}\bigl(1+(n-1)q-\frac{q}{\alpha}\bigr)\bigr]} &\leq 1 , \\
	\bigl((1-q)\alpha+q\bigr) \Bigl[1-(1-q)^{n-1} \Bigl(1+(n-1)q-\frac{q}{\alpha}\Bigr)\Bigr] - q &\geq 0 , \\
	(1-q)\alpha - \bigl((1-q)\alpha+q\bigr) \Bigl[(1-q)^{n-1} \Bigl(1+(n-1)q-\frac{q}{\alpha}\Bigr)\Bigr] &\geq 0 , \\
	\alpha \Bigl[ 1-q - \Bigl( 1-q+\frac{q}{a} \Bigr) (1-q)^{n-1} \Bigl(1+(n-1)q-\frac{q}{\alpha}\Bigr) \Bigr] &\geq 0 ,~\text{and} \\
	\alpha (1-q) \Bigl[ 1 - (1-q)^{n-2} \Bigl( 1-q+\frac{q}{a} \Bigr) \Bigl(1+(n-1)q-\frac{q}{\alpha}\Bigr) \Bigr] &\geq 0 .
\end{align*}
Since $\alpha(1-q)\geq 0$ and $n=\lceil 1/\alpha\rceil$, it suffices to show that
\begin{align*}
	(1-q)^{\lceil 1/\alpha \rceil -2} \Bigl(1 - q + \frac{q}{\alpha}  \Bigr) \biggl(1 - q - \frac{q}{\alpha} + q \biggl\lceil\frac{1}{\alpha}\biggr\rceil \biggr) \leq 1 .
\end{align*}
Note that $(1-q+\frac{q}{\alpha})\geq 0$ and $(1-q-\frac{q}{\alpha}+q\lceil\frac{1}{\alpha}\rceil)\geq 0$. Note further that $\alpha<1$ and thus $\lceil 1/\alpha\rceil \geq 2$, so that by \lemref{lem:bernoulli}, $(1-q)^{\lceil 1/\alpha\rceil-2}\leq (1+(\lceil 1/\alpha\rceil-2)q)^{-1}$. 
It thus suffices to show that
\begin{align*}
	\frac{1}{1+q\lceil \frac{1}{\alpha}\rceil -2q} \biggl(1 - q + \frac{q}{\alpha} \biggr)\biggl(1 - q - \frac{q}{\alpha} + q\biggl \lceil \frac{1}{\alpha} \biggr\rceil \biggr) \leq 1 .
\end{align*}
Since $1+q\lceil 1/\alpha\rceil-2q\geq 0$ this is equivalent to
\begingroup 
\begin{align*}
	1+q \bigg\lceil \frac{1}{\alpha}\bigg\rceil -2q &\geq \biggl(1 - q + \frac{q}{\alpha} \biggr)\biggl(1 - q - \frac{q}{\alpha} + q\biggl \lceil \frac{1}{\alpha} \biggr\rceil \biggr) \\
	&= \biggl[1 + q\biggl(\frac{1}{\alpha}-1\biggr) \biggr]\biggl[1 + q\biggl(\bigg\lceil\frac{1}{\alpha}\bigg\rceil - \frac{1}{\alpha} - 1 \biggr) \biggr] \\
	&= 1 + q\biggl(\frac{1}{\alpha}-1 + \bigg\lceil\frac{1}{\alpha}\bigg\rceil - \frac{1}{\alpha}-1\biggr) + q^2\biggl(\frac{1}{\alpha}-1\biggr)\biggl(\bigg\lceil\frac{1}{\alpha}\bigg\rceil -\frac{1}{\alpha} -1 \biggr) \\
	&= 1 + q\biggl(\bigg\lceil\frac{1}{\alpha}\bigg\rceil - 2\biggr) + q^2\biggl(\frac{1}{\alpha}-1\biggr)\biggl(\bigg\lceil\frac{1}{\alpha}\bigg\rceil -\frac{1}{\alpha} -1 \biggr)
\end{align*}
\endgroup
and to
\begin{align*}
	q^2\biggl(\frac{1}{\alpha}-1\biggr)\biggl(\bigg\lceil\frac{1}{\alpha}\bigg\rceil -\frac{1}{\alpha} -1 \biggr) \leq 0 ,
\end{align*}
which is satisfied because $q^2\geq 0$, $\frac{1}{\alpha}-1\geq 0$, and $\lceil 1/\alpha\rceil-1/\alpha-1\leq 0$.

We have shown that for all $\alpha\in(0,1)$ and $n=\lceil 1/\alpha \rceil$, and $q\in(0,1)$,
\begin{align*}  %
	\frac{q}{\Bern_n(h_\alpha; q)} &\leq s_{\alpha}(q) .
\end{align*}
To see that this holds in fact for all $n\geq\lceil 1/\alpha \rceil$, it suffices to show that for all $q\in[0,1]$,
\[
	\frac{q}{\Bern_{n}(h_{\alpha}; q)} \leq \frac{q}{\Bern_{\lceil 1/\alpha \rceil}(h_{\alpha}; q)} .
\]
The latter follows from concavity of $h_{\alpha}$ for all $\alpha \in (0,1]$ and from monotonicity of the approximation of a concave function $h:[0,1]\to\R$ by Bernstein polynomials in the sense that for all $q\in [0,1]$, $\Bern_n(h;q)\leq\Bern_{n+1}(h;q)\leq h(q)$~\citep[\S~7]{Phil2003}. 
\end{proof}

\subsection{Proof of \thmref{thm:irregular-lower}}
\label{app:lower}

\irregularLower*

\begin{proof}
Let $\epsilon>0$ and consider the discrete distribution with support $\{1,\frac{kn}{\epsilon}\}$, $\Pr[v_i=1]=1-\frac{\epsilon}{n^2}$, and $\Pr[v_i=\frac{kn}{\epsilon}]=\frac{\epsilon}{n^2}$. Let $X\sim\mathsf{Bin}(n,\epsilon/n^2)$ and $Y\sim\mathsf{Bin}(n-k,\epsilon/n^2)$. Clearly the optimal static price~$p$ must be either $1$ or $\frac{kn}{\epsilon}$, and $\Rsnk(1)=k$ and $$\Rsnk\biggl(\frac{kn}{\epsilon}\biggr)=\E[\min\{X,k\}]\frac{kn}{\epsilon}\leq \E[X]\frac{kn}{\epsilon} = \frac{\epsilon}{n} \cdot \frac{kn}{\epsilon}=k.$$ On the other hand, for dynamic prices $\vec{p}$ where $p_i=\frac{kn}{\epsilon}$ if $i\leq n-k$ and $p_i=1$ otherwise,
\begingroup 
\begin{align*}
	\lim_{\epsilon\rightarrow 0}\Rdnk(\vec{p}) 
	&= \lim_{\epsilon\rightarrow 0} \E[\min\{Y,k\}]\biggl(\frac{kn}{\epsilon}-1\biggr)+k \\
	&= \lim_{\epsilon\rightarrow 0} \biggl( \bigl(\P[Y<k]\cdot\E[Y] + \P[Y\geq k]\cdot k\bigr)\biggl(\frac{kn}{\epsilon}-1\biggr)\biggr) + k \\
	&\geq \lim_{\epsilon\rightarrow 0} \biggl(1-\Bigl(\frac{\epsilon}{n^2}\Bigr)^k\biggr) \biggl( \E[Y] \biggl(\frac{kn}{\epsilon}-1\biggr) + k \biggr) \\
	&= \lim_{\epsilon\rightarrow 0} \biggl(1-\Bigl(\frac{\epsilon}{n^2}\Bigr)^k\biggr) \biggl( (n-k)\frac{\epsilon}{n^2} \biggl(\frac{kn}{\epsilon}-1\biggr) + k \biggr) = (n-k)\frac{k}{n}+k=2k-\frac{k^2}{n}.
\end{align*}
\endgroup
Choosing $\epsilon$ small enough shows the claim. 
\end{proof}

\end{document}